\documentclass[letterpaper,12pt]{amsart}

\usepackage{latexsym, amsmath, amssymb,amsthm, natbib,verbatim, xy}
\xyoption{all}

\usepackage{tgpagella}
\usepackage{mathpazo}

\usepackage[left=.95in,top=1in,right=.95in,bottom=1in]{geometry}
\usepackage{setspace,color}
\usepackage{graphicx}
\usepackage{enumerate}
\usepackage[multiple]{footmisc}
\usepackage[leqno]{amsmath}
\usepackage{hyperref}
\usepackage{mathtools}

\usepackage{pgf,tikz}
\usepackage{graphicx}
\usetikzlibrary{arrows}
\definecolor{burgundy}{rgb}{0.5, 0.0, 0.13}
\definecolor{mediumtealblue}{rgb}{0.0, 0.33, 0.71}
\definecolor{blue(munsell)}{rgb}{0.0, 0.5, 0.69}
\definecolor{unitednationsblue}{rgb}{0.36, 0.57, 0.9}
\definecolor{mediumelectricblue}{rgb}{0.01, 0.31, 0.59}
\hypersetup{colorlinks,
            linkcolor=mediumtealblue,
            anchorcolor=mediumtealblue,
            citecolor=burgundy,urlcolor={mediumelectricblue}}

\setcounter{MaxMatrixCols}{10}

\onehalfspacing

\usepackage{etoolbox}
\patchcmd{\section}{\scshape}{\bfseries}{}{}
\makeatletter
\renewcommand{\@secnumfont}{\bfseries}
\makeatother

\renewenvironment{quote}{%
   \list{}{%
     \leftmargin0.75cm   
     \rightmargin0.5cm
   }
   \item\relax
}
{\endlist}

\usepackage{tikz}

\sloppy

\newtheorem{theorem}{Theorem}
\newtheorem{corollary}{Corollary}

\newtheorem{proposition}{Proposition}

\usepackage{caption} 
\usepackage{algorithm, algorithmic}

\theoremstyle{definition}

\newtheorem{example}{Example}

\renewenvironment{quote}{%
   \list{}{%
     \leftmargin0.75cm   
     \rightmargin0.5cm
   }
   \item\relax
}
{\endlist}


\begin{document}

\title{Optimistic and pessimistic approaches for cooperative games}

\author[Atay and Trudeau]{Ata Atay \and Christian Trudeau}\thanks{\hspace{-0.75cm}\textbf{Date: }{\today.}\\ 
\textbf{Atay:} Department of Mathematical Economics, Finance and Actuarial Sciences, and Barcelona Economic Analysis Team (BEAT), University of Barcelona, Spain. E-mail: \href{mailto:aatay@ub.edu}{aatay@ub.edu}. \\
\textbf{Trudeau:} Department of Economics, University of Windsor, Windsor, ON, Canada. E-mail: \href{mailto:trudeauc@uwindsor.ca}{trudeauc@uwindsor.ca}.\\ \textbf{Acknowledgements:} Ata Atay is a Serra H\'{u}nter Fellow. Ata Atay gratefully acknowledges the support from the Spanish Ministerio de Ciencia e Innovaci\'{o}n research grant PID2023-150472NB-100/AEI/10.130339/501100011033, from the Generalitat de Catalunya research grant 2021-SGR-00306. Christian Trudeau gratefully acknowledges financial support by the Social Sciences and Humanities Research Council of Canada [grant number 435-2019-0141]. This material is based upon work supported by National Science Foundation under Grant No, DMS-1928930 while Ata Atay was in residence at the Mathematical Science Research Institute in Berkeley, California, during the Fall 2023 semester. We thank Mikel \'{A}lvarez-Mozos, Sreoshi Banerjee, Gustavo Bergantiños, Tuomas Sandholm, Vikram Manjunath, Leticia Lorenzo, Leanne Streekstra, María Gómez-Rúa, William Thomson, and the participants of the seminar at University of Vigo, 2023 Ottawa Microeconomic Theory Workshop, 2024 SAET conference (Santiago de Chile), 2024 SSCW Meeting (Paris), 2024 Spanish Social Choice Meeting.}

\begin{abstract}
Cooperative game theory studies how to allocate the joint value generated by a set of players. These games are typically analyzed using the characteristic function form with transferable utility, which represents the value attainable by each coalition. In the presence of externalities, coalition values can be defined through various approaches, notably by trying to determine the best and worst-case scenarios. Typically, the optimistic and pessimistic perspectives offer valuable insights into strategic interactions. In many applications, these approaches correspond to the coalition either choosing first or choosing after the complement coalition. In a general framework in which the actions of a group affects the set of feasible actions for others, we explore this relationship and show that it always holds in the presence of negative externalities, but only partly with positive externalities.  We then show that if choosing first/last corresponds to these extreme values, we also obtain a useful inclusion result: allocations that do not allocate more than the optimistic upper bounds also do not allocate less than the pessimistic lower bounds. Moreover, we show that when externalities are negative, it is always possible to guarantee the non-emptiness of these sets of allocations. Finally, we explore applications to illustrate how our findings provide new results and offer a means to derive results from the existing literature.
\medskip \\
\noindent \textbf{Keywords:} Cooperative games $\cdot$ optimization problems $\cdot$ cost sharing $\cdot$ core $\cdot$ anti-core $\cdot$ externalities\\
\noindent \textbf{JEL Classification:} C44 $\cdot$ C71 $\cdot$ D61 $\cdot$ D62 $\cdot$ H23\\
\noindent \textbf{Mathematics Subject Classification (2010):} 90B10 $\cdot$ 90B30 $\cdot$ 90B35 $\cdot$ 91A12 $\cdot$ 91B32
\end{abstract}
\maketitle

\newpage

\section{Introduction}
\label{sec:intro}

Cooperative game theory primarily focuses on groups of players
collaborating and pooling their profits/costs. A central question in this field is how the
profits (or costs) of a joint effort can be divided among the members of these coalitions.
A cooperative game is a characteristic function that specifies the value obtainable by each coalition of players. We focus on games with transferable utility (TU), which assume that each coalition can distribute its value in any way among its members. \footnote{See \cite{ps07} for a comprehensive introduction to cooperative games.}

In the presence of externalities it is not straightforward to determine the value that should be credited to a coalition, as the behavior of outside agents influences this value. In many applications, from queueing problems \citep{chun16} to minimum cost spanning tree problems \citep{bird1976} and river sharing problems \citep{AMBEC2002453}, various ways to define the value of a coalition have been proposed, in manners specific to the application at hand. These value functions are usually labeled as "optimistic" and "pessimistic", and typically correspond to the coalition in question acting before or after the complement set of agents.

We attempt to provide general answers on how to construct these coalitional functions in a general model containing what we call ``feasibility externalities". In this model, the revenue of an agent only depends on its own actions, that must be chosen from a feasible set. However, this feasible set might be impacted by the  actions taken by other agents. If the actions of others reduce the feasible set (compared to them being inactive), we say that we have negative feasibility externalities. Conversely, if the feasible set enlarges with the actions of others, we have positive feasibility externalities. The model includes as special cases the applications described above, as well as claims problems \citep{oneill82}, airport problems \citep{lo73} and joint production problems \citep{MOULIN1990, ms92}, among others.

To determine the value of a coalition we need to make assumptions on the behavior of the complement set of agents. We stand in opposition to the classic definitions of the $\alpha$ and $\beta$ values (\citealp{s82}, pp. 136--138), which either forces the coalition to maximize its minimum value, or sees the complement set minimize its maximum value, as both of these approaches sees the complement set of agents actively as trying to hurt the coalition. Our approach is closer to subgame perfection in that we rule out non-credible threats and suppose that the complement set of agents simply maximizes their (joint) benefits.

Our first question is as follows: do the optimistic and pessimistic value functions correspond to the value functions in which a coalition chooses their actions first (before the complement set of agents) or last (after them)?

Our results on this question are as follows:

\begin{enumerate}
    \item With negative feasibility externalities, the optimistic value function corresponds to a coalition choosing first, and the pessimistic value function corresponds to a coalition choosing last.
    \item With positive feasibility externalities, the pessimistic value function corresponds to a coalition choosing first, but the optimistic value function does not always correspond to a coalition choosing last.
\end{enumerate}

Intuitively, with negative externalities, choosing first gives a coalition the largest feasible set, leading to the optimistic value. Conversely, with positive externalities, choosing first results in the smallest feasible set, resulting in the pessimistic value. However, the implications of choosing last are not as clear. While we can compare the feasible set to when the complement set is inactive, the coalition might have alternate strategies to induce a more favorable set of actions from the complement set of agents.

For example, consider a situation in which a group of agents $N$ share a joint production facility for a common good, with the facility exhibiting increasing returns to scale. If a coalition $S$ chooses last, it might be that $N\setminus S$, faced with high marginal costs of production as the first users, only consumed a small amount of goods, leaving $S$ with relatively high marginal costs. It might be in the best interest of $S$ to have a subset $T$ of its members choose first, reducing the marginal costs for $N\setminus S$ and thus increasing their consumption, which in turns reduces the marginal costs faced by $S\setminus T$. It turns out that while these strategies can generate higher marginal contributions with positive externalities, they cannot yield lower marginal contributions with negative externalities.

The pessimistic value function provides lower bounds on what each coalition should receive, and thus it is natural to study its \emph{core} (\citealp{g59}), which is the set of allocations that distribute the total value while ensuring that each coalition receives at least its intrinsic value. On the other hand, the optimistic value function provides upper bounds on what each coalition should receive, and we should seek to prevent any coalition from surpassing its upper bound. The corresponding concept is the \emph{anti-core}, constructed by inverting the core inequalities (see, for instance, \citealp{oishi2016}). \footnote{This concept is rooted in considerations of fairness (\citealp{vew23}) but also serves as a measure of stability, as any coalition exceeding its best-case scenario may cause other agents to withdraw from cooperation.}

It is not immediately evident whether there exists a connection between the objective to ensure the pessimistic lower bounds and the objective of preventing anyone from exceeding the optimistic upper bounds. Our results demonstrate that it depends on the nature of externalities. Specifically, we show that there is a connection when optimistic/pessimistic corresponds to choosing first/last.

More precisely, we have that: 

\begin{enumerate} \item \textbf{Negative externalities}
\begin{enumerate}[(i)]
\item The anti-core of the optimistic value function is always a subset of the core of the pessimistic value function. That is, the objective of providing allocations that do not exceed the optimistic upper bounds is always at least as challenging as the objective of securing allocations surpassing the pessimistic lower bounds. 
\item Moreover, both sets are guaranteed to be non-empty, ensuring feasible solutions that provide no more than the optimistic upper bounds and no less than the pessimistic lower bounds.
\end{enumerate}
\item \textbf{Positive externalities}
\begin{enumerate}[(i)]
\item The relationship obtained in 1.(i) is guaranteed only if, for all coalitions, the optimistic value corresponds to its members acting last. 
\item The non-emptiness result in 1.(ii) is not guaranteed with positive feasibility externalities.
\end{enumerate}
\end{enumerate}

Altogether, our results are useful in multiple ways and they have significant implications on the analysis of many applications. For problems with negative feasibility externalities, we obtain the non-emptiness of the core across all cases, eliminating the need for case-by-case verification as in the previous literature. The relationship between the anti-core of the optimistic game and the core of the pessimistic game is obtained as soon as optimistic/pessimistic is equivalent to first/last; the anti-core of the optimistic game is then a refinement of the core of the pessimistic game. Given these relationships and the relative ease of computation, the results provide an argument to use first/last as reasonable approximations for lower/upper bounds even when the optimistic/pessimistic functions are not obtained from the first/last functions. 

We apply our results to various well-studied applications. These applications build on the links between TU games and joint optimization problems (see, among others, \citealp{kz82a}; \citealp{kz82b}; \citealp{gg92}). Applying our results to queueing,  minimum cost spanning tree and river sharing problems, we reobtain and reinterpret many classic results, while new results also emerge.

In some other applications, optimistic and pessimistic approaches yield dual games. We provide a sufficient condition for this to occur when optimistic/pessimistic corresponds to choosing first/last: the sequential and selfish decisions of a coalition picking first and its complement picking last must always yield an optimal outcome for the grand coalition. For instance, reconsider the joint production problem as above, with increasing returns to scale. If coalitions can adjust their consumption, both $S$ (choosing first) and $N\setminus S$ (choosing last) might reduce their consumption as they are faced with higher marginal costs and fail to internalize the externalities. The pessimistic and optimistic approaches are not dual, as the resulting consumption is sub-optimal. But if the demands are inelastic (as in \citealp{ms92} for instance), then we obtain the efficient amount of goods consumed, and the two approaches become dual.  

Hence, we establish the coincidence between the core of the pessimistic game and the anti-core of the optimistic game when the games defined on the order of coalitions arrival are dual. We illustrate duality by means of two well-known applications, bankruptcy (claims) and airport problems (see \cite{oneill82} and \cite{lo73} respectively).

\subsection{Related literature}
The list of methods proposed to determine the value of a coalition is long, starting with the $\alpha$ and $\beta$ games already discussed. 

The levels of cooperation within the coalition itself has been modeled in multi-choice (\citealp{hr93}) and fuzzy cooperative games (\citealp{a81}).

The question of how the complement coalition $N\setminus S$ would reorganize itself if a coalition $S$ breaks from the grand coalition is studied in partition function form games (\citealp{k18}). By opposition, we suppose that $N\setminus S$ still acts jointly, in a manner consistent with the behavior of $S$, an approach closer to \cite{hs03}.

Optimistic and pessimistic assumptions on the behavior of the complement set are common, for example in the coalition formation literature. Our idea that the complement set is using a strategy that is optimal for them, instead of trying to hurt $S$, is also found in \cite{rv97}. The non-cooperative interplay between $S$ and $N\setminus S$ is found notably in \cite{i81}. This strand of literature seeks a solution concept that consistently addresses the strategic effects of externalities in the coalition formation process, whereas our aim is to provide a unified model for TU games based on a joint optimization problems.

Closer to our perspective, \cite{Curiel_Tijs_1991} introduced two operators, minimarg and maximarg, which determine each coalition's marginal contribution based on the worst and the best possible order of agents, respectively. The minimarg assigns the smallest marginal contribution, while the maximarg assigns the largest, embodying pessimistic and optimistic viewpoints, respectively. Iteratively applying these operators to a game leads to a convex and concave game in the limit, with these games being dual to each other. Our approach differs in that they build these operators from a given value game, while we consider the underlying problem of how to define the games themselves.\medskip\\

The paper is organized as follows. Section \ref{sec:prel} provides some preliminaries on TU games. Section \ref{sec:model} introduces the framework and then tries to determine the value of a coalition. We set our ground rules in trying to determine  optimistic and pessimistic values on the value of a coalition. In Section \ref{sec:main} we provide our main results: (i) a relationship between choosing first/last and optimistic/pessimistic values, (ii) an inclusion result between the set of allocations making sure that no coalition gets more than the optimistic upper bounds and the one guaranteeing the pessimistic lower bounds, if first/last corresponds to optimistic/pessimistic.  In Section \ref{sec:app} we apply our model to a wide range of applications that have been well-studied in the literature. Section \ref{sec:extension} discusses extensions of our model. Finally, Section \ref{sec:conc} concludes.

\section{Preliminaries}
\label{sec:prel}
A \emph{cooperative game with transferable utility} (or TU game) is defined by a pair $(N, v)$ where $N$ is the (finite) set of agents and $v$ is a value function that assigns the value $v(S)$ to each coalition $S\subseteq N$ with $v(\emptyset)=0$. The number $v(S)$ is the value of the coalition. Whenever no confusion may arise as to the set of players, we will identify a TU game $(N,v)$ with its value function $v$.

Given a game $v$, an allocation is a tuple $x\in \mathbb{R}^{N}$ representing players' respective allotment. The total payoff of a coalition $S$ is denoted by $x(S)=\sum_{i\in S}x_{i}$ with $x(\emptyset)=0$. An allocation is \emph{efficient} if $x(N)=v(N)$, \emph{individually rational} if $x(i)\ge v(\{i\})$ for all $i\in N$, and \emph{coalitionally rational} if $x(S)\ge v(S)$ for all $S\subseteq N$.

An allocation is said to be in the \emph{core} of $v$ if it is efficient and coalitionally rational. Then, the core of the game $v$ is the set of all such allocations: $\mathcal{C}(v)=\left\{x\in\mathbb{R}^{N}: x(S)\ge v(S) \text{ for all } S\subset N \text{ and } x(N)=v(N)\right\}$. An allocation is said to be in the \emph{anti-core} of $v$ if it is efficient and for all coalitions the reversed coalitional rationality inequalities hold. Then, the anti-core of the game $v$ is the set of all such allocations: $\mathcal{A}(v)=\left\{x\in\mathbb{R}^{N}: x(S)\le v(S) \text{ for all } S\subset N \text{ and } x(N)=v(N)\right\}$. 

Let $\lambda: 2^N\setminus\{\emptyset\} \rightarrow [0,1]$ where for all $i\in N$ we have $\sum_{S\subseteq N: S\ni i}\lambda_S=1$. Let $\Lambda$ be the set of such balanded weights $\lambda$. A game is \emph{balanced} if $\sum_{S\subseteq N}\lambda_Sv(S)\leq v(N)$ for all $\lambda\in \Lambda$. A game $v$ has a non-empty core if and only if it is balanced \citep{bondareva1963,shapley1967}. A game is \emph{anti-balanced} if $\sum_{S\subseteq N}\lambda_Sv(S)\geq v(N)$ for all $\lambda\in \Lambda$. A game $v$ has a non-empty anti-core if and only if it is anti-balanced. 

Convexity and concavity (\citealp{s71}) are conditions that have been extensively studied to prove balancedness. A game $(N,v)$ is said to be \emph{convex} if $v(T\cup\{i\})-v(T)\ge  v(S\cup \{i\})-v(S)$ for all $i\in N$ and $S\subseteq T\subseteq N\setminus \{i\}$. A game $(N,v)$ is said to be \emph{concave} if $v(T\cup\{i\})-v(T)\le  v(S\cup \{i\})-v(S)$ for all $i\in N$ and $S\subseteq T\subseteq N\setminus \{i\}$.

The Shapley value (\citealp{s53}) is a single-valued solution that has interesting fairness properties. It is the weighted sum of the agents' marginal contributions to all coalitions. Formally, given a game $(N,v)$, the Shapley value $Sh(v)$ assigns to each agent $i\in N$ the payoff $Sh_{i}(v)=\sum\limits_{S\subseteq N\setminus \{i\}}\frac{|S|!(|N|-|S|-1)!}{|N|!}\left[v(S\cup\{i\})-v(S)\right]$.

A game $(N,v^{*})$ is the \emph{dual game} of the game $(N,v)$ if $v^{*}(S)=v(N)-v(N\setminus S)$ for all $S\subseteq N$.

For dual games, it is well-known that the anti-core of $v$ coincides with the core of $v^{*}$, and vice versa.
\begin{proposition}
\label{prop:dual}
If $v$ and $v^{*}$ are dual, then $\mathcal{A}(v)=\mathcal{C}(v^{*})$ and $\mathcal{A}(v^{*})=\mathcal{C}(v)$.
\end{proposition}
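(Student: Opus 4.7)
The plan is to prove the two set equalities by direct double inclusion, exploiting the symmetry that duality is an involution, so only the first equality $\mathcal{A}(v)=\mathcal{C}(v^{*})$ needs a real argument. The key observation to set up both directions is that efficiency is the same requirement in both games, since $v^{*}(N) = v(N) - v(\emptyset) = v(N)$, so every candidate allocation $x$ satisfies $x(N) = v(N) = v^{*}(N)$ as soon as it is efficient for one of them.

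For the inclusion $\mathcal{A}(v)\subseteq \mathcal{C}(v^{*})$, I would take $x \in \mathcal{A}(v)$ and, for each proper nonempty $S\subset N$, apply the anti-core inequality at the complement coalition $N\setminus S$ (which is itself a proper subset when $S\neq \emptyset$), obtaining $x(N\setminus S)\le v(N\setminus S)$. Subtracting from efficiency gives $x(S) = x(N)-x(N\setminus S) \ge v(N)-v(N\setminus S) = v^{*}(S)$, which is precisely the core inequality for $v^{*}$. The case $S=\emptyset$ is trivial. For the reverse inclusion, the same manipulation runs in the other direction: if $x \in \mathcal{C}(v^{*})$, then for every proper nonempty $T$, setting $S = N\setminus T$ yields $x(S)\ge v^{*}(S) = v(N) - v(T)$, and efficiency gives $x(T) = v(N) - x(S) \le v(T)$, so $x \in \mathcal{A}(v)$.

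Once $\mathcal{A}(v)=\mathcal{C}(v^{*})$ is established, the identity $\mathcal{A}(v^{*})=\mathcal{C}(v)$ drops out by applying the same result with $v$ replaced by $v^{*}$, after checking the routine fact that duality is an involution: $(v^{*})^{*}(S) = v^{*}(N)-v^{*}(N\setminus S) = v(N) - \bigl(v(N)-v(S)\bigr) = v(S)$.

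I do not anticipate a real obstacle here; the argument is essentially bookkeeping about complements, and the only subtlety is making sure that when I invoke a (co)core inequality at $N\setminus S$, that set is actually a proper subset of $N$, which holds precisely when $S\neq \emptyset$. The edge cases $S=\emptyset$ and $S=N$ are handled by $x(\emptyset)=0$ and by efficiency, respectively.
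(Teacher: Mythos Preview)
Your argument is correct and is the standard proof of this folklore fact. The paper itself does not give a proof of Proposition~\ref{prop:dual}; it simply states the result as ``well-known'' and moves on, so there is nothing to compare against beyond noting that your complement-swapping argument is exactly the routine verification one expects.
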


\section{The Model}
\label{sec:model}
We build a general model that allows for what we call \textit{feasibility externalities}, where the actions of others do not have a direct impact on the revenues one receives, but these actions might affect the set of actions one can take.

Formally, each agent $i\in N$ can take actions, with the set of possible actions defined as $\mathbb{A}_{i}.$ For each agent, the null action $\ominus_i \in \mathbb{A}_{i}$ means that one possible action is to stay inactive. For each $S\subseteq N,$ we define as $\mathbb{A}^{S}=\bigtimes\limits_{i\in S}\mathbb{A}_{i}$ the sets of actions jointly available to $S$ and $\mathbb{A}\equiv \mathbb{A}^N$.

When agents choose their actions, some actions might not be available. We thus define the \emph{feasible set}, which depends on the actions of other agents. More precisely, for all $S\subseteq N$ and all $a_{N\setminus S}\in \mathbb{A}^{N\setminus S},$ $f_{S}(a_{N\setminus S})\subseteq \mathbb{A}^{S}$ represents the set of actions jointly feasible for $S.$ We suppose that these sets are always non-empty, since for any coalition, all agents being inactive, $\ominus_S$, is always available as an action. Since the coalition $N$ includes all players, we write $f_N$ instead of $f_{N}(\ominus_{\emptyset})$. Let $f$ represent the set of all such feasibility functions for all coalitions $S$. We impose a \textbf{feasibility complementarity} condition: for all $S\subset N$ and $a_{N\setminus S} \in \mathbb{A}^{N\setminus S}$, $a_S\in f_S(a_{N\setminus S})$ if and only if $\left(a_S,a_{N\setminus S}\right)\in f_N$. In words, we assume that if a coalition selects first and the remaining agents select next, the combination of actions is jointly feasible for the grand coalition. Inversely, a set of feasible actions for the grand coalition must be such that if $N\setminus S$ picks their actions in that set first, the remaining actions are feasible for $S$. The condition is mild and satisfied by all the applications in this paper. Consider the following example that fails the condition: Both $S$ and $N\setminus S$ have a fixed budget to spend, but each dollar spent by $N\setminus S$ decreases the budget for $S$, while spending by $S$ has no impact on $N\setminus S$. Suppose that $S$ picks first and spends all of its budget, then $N\setminus S$ does the same. Then, the combination of their actions is not feasible for the grand coalition.

For each agent $i\in N$ we have a revenue function $R_{i}:\mathbb{A}_{i}\rightarrow \mathbb{R}$. Let $R$ represent the set of individual revenue functions. Given that we often have coalitions maximizing their joint revenues, if coalition $S$ chooses the set of actions $a_S$, we abuse notation and write $R_i(a_S)$ instead of $R_i((a_S)_i)$ for all $i\in S$.

The grand coalition faces an optimization problem that we generally write as $\max_{a_{N}\in f_{N}}\sum_{i\in N}R_{i}(a_{N})$. We define a problem $P$ as $(\mathbb{A},f,R),$ which describes the set of actions, the feasibility sets, and the revenue functions. We suppose that problem $P=(\mathbb{A},f,R)$ has a solution.
Let $\mathcal{P}$ be the set of all such problems (for all $\mathbb{A},f,R$).

\begin{example}
Suppose a simple queueing problem. All agents in $N$ have one single job to be processed on a machine. The machine can process one job per period, and agents have linear waiting costs: if agent $i$'s job is processed in period $t$, he suffers a cost of $t\times w_{i}$, where $w_i\geq 0$ is his personal waiting cost parameter. 

In this context, we can set $\mathbb{A}_i=\{1,\ldots,|N|\}$ to be the set of periods in which $i$'s job could be processed.\footnote{More precisely, $a_i\in \mathbb{A}_i$ means that the jobs starts processing in period $a_i-1$ and is completely processed at time $a_i$.} Then, for any $S\subseteq N$, $f_S(\ominus_{N\setminus S})$ represents what is jointly feasible for $S$ if $N\setminus S$ is inactive, i.e., if their jobs are not processed. We then have that $f_S(\ominus_{N\setminus S})$ is a function $\theta^S: S\rightarrow \mathbb{A}^S$ such that $\theta^S_i\neq \theta^S_j$ for all $i,j$ different in $S$. In words, no two agents in $S$ can be assigned the same processing period.

For $a_{N\setminus S}\neq \ominus_{N\setminus S}$, we have the additional constraint that $\theta^S_i\neq a_j$ for all $i\in S$ and $j\in N\setminus S$. Stated otherwise, the agents in $S$ cannot be assigned to a period already occupied by an agent in $N\setminus S$.

Finally, we have $R_i(a_i)=-w_ia_i$ for all $i\in N$, i.e., each agent has a disutility $w_i$ per period waiting before the job is processed.\footnote{If $a_i=\ominus$, then $R_i=-\infty$ as the agent's job is not processed.} We can thus rewrite the problem of the grand coalition as $\max_{\theta \in\Theta(N)}\sum_{i\in N} -\theta_i w_i$ where $\Theta(N)$ is the set of bijections from $N$ to $\{1,\ldots,|N|\}$.
\end{example}

\begin{example}
    We consider agents that share a joint production technology for a homogeneous good. The production technology is represented by a non-decreasing function $C:\mathbb{R}_+\rightarrow \mathbb{R}_+$ that assigns a cost to any quantity of good produced. Each agent must decide how much he wants to consume and how much to pay.
    
    In this context we can set $\mathbb{A}_i=(q_i,t_i)$ where $q_i\in \mathbb{R}_+$ is the amount consumed and $t_i\in \mathbb{R}$ is the money paid. We write $\ominus_i=(0,0)$.

    Then, for any $S\subseteq N$, $f_S(\ominus_{N\setminus S})$ represents what is jointly feasible for $S$ if $N\setminus S$ is inactive, i.e. if they do not consume any good and pay anything. 
    We then have that $f_S(\ominus_{N\setminus S})=\{(q_S,t_S)\in (\mathbb{R}_+^S,\mathbb{R}^S) \mid\sum_{i\in S}t_i\geq C\left(\sum_{i\in S}q_i \right)\}$. In words, $f_S$ is a budget set in which the coalition must collect enough money to cover for cost of the units it wants to consume.
    
    For $a_{N\setminus S}=(q_{N\setminus S},t_{N\setminus S})\neq \ominus_{N\setminus S}$, we have that 
    \[
    f_S(q_{N\setminus S},t_{N\setminus S})=\left\{(q_S,t_S)\in (\mathbb{R}^S_+,\mathbb{R}^S) \mid\sum_{i\in S}t_i+\sum_{i\in N\setminus S}t_i\geq C\left(\sum_{i\in S}q_i +\sum_{i\in N\setminus S}q_i\right)\right\}.
    \]
    In words, the payments of agents in $S$ must be enough to cover for the cost of the total number of units consumed, net of what was paid by $N\setminus S$. Note that if $N\setminus S$ just covered the cost of their consumption, then we need $\sum_{i\in S}t_i\geq C\left(\sum_{i\in S}q_i +\sum_{i\in N\setminus S}q_i\right)-C\left(\sum_{i\in N\setminus S}q_i\right)$.

    Finally, we have $R_i(q_i,t_i)=u_i(q_i)-t_i$ where $u_i:\mathbb{R}_+\rightarrow\mathbb{R}$ is a non-decreasing utility function.
\end{example}

\subsection{Externalities}

We say that a problem exhibits \textbf{negative externalities} if for all $i\in S\subseteq N$ and all $a_{N\setminus S}\in f_{N\setminus S}(\ominus_S)$, we have $f_{S}(a_{N\setminus S})\subseteq f_{S}(\ominus _{N\setminus S})$. Let $\mathcal{P}^{-}$ be the set of all such problems.

We say that a problem exhibits \textbf{positive externalities} if for all $i\in S\subseteq N$ and all $a_{N\setminus S}\in f_{N\setminus S}(\ominus_S)$,  we have $f_{S}(a_{N\setminus S})\supseteq f_{S}(\ominus _{N\setminus S})$. Let $ \mathcal{P}^{+}$ be the set of all such problems.




\subsection{Defining cooperative games}

\label{subsec:games}

It is not always trivial to determine what value to assign to a coalition.
In the presence of externalities, the value depends on assumptions we make
about the behavior of agents external to the coalition considered. 

A first approach consists in applying  the classic definitions of $\alpha$ and $\beta$ games. As a reminder, $\alpha$ games are maximin: the complement $S$ first tries to minimize the payoff to $S$, and then $S$ acts to maximize its payoff. By opposition $\beta$ games are minimax: $S$ tries to maximize its payoff, and then the complement $N\setminus S$ tries to minimize their payoffs. In our context, they are defined  as follows:
\begin{equation*}
v^{\alpha }(S)=\max_{a_{S}\in f_{S}(a_{N\setminus S})}\min_{a_{N\setminus
S}\in f_{N\setminus S}\left( \ominus _{S}\right) }\sum_{i\in S}R_{i}(a_{S})
\end{equation*}%
and 
\begin{eqnarray*}
v^{\beta }(S) &=&\min_{a_{N\setminus S}\in f_{N\setminus S}\left(
a_{S}\right) }\max_{a_{S}\in f_{S}(\ominus _{N\setminus S})}\sum_{i\in
S}R_{i}(a_{S}) \\
&=&\max_{a_{S}\in f_{S}(\ominus _{N\setminus S})}\sum_{i\in S}R_{i}(a_{S})
\end{eqnarray*}%
where the last simplification is because of our assumption that there is no
direct externalities. Thus, in $v^{\beta },$ $N\setminus S$ cannot do
anything. However, in $v^{\alpha },$ $N\setminus S$ can reduce the value of 
$S$ if its actions allows to reduce the feasible set of coalition $S.$

We want to compare these classic game definitions to some that have been
used in many applied problems, as queueing and minimum cost spanning tree problems, consisting in coalition $S$ choosing either first or
after $N\setminus S.$ If it chooses first, then we have that 
\begin{eqnarray*}
v^{F}(S) &=&\max_{a_{S}\in f_{S}(\ominus _{N\setminus S})}\sum_{i\in
S}R_{i}(a_{S}) \\
&=&v^{\beta }(S).
\end{eqnarray*}

If $S$ chooses after $N\setminus S,$ we specify that the objective
of $N\setminus S$ is not to harm $S,$ but rather to maximize its own revenues. In general, let $\mu _{S}(a_{N\setminus
S})=\arg \max_{a_{S}\in f_{S}(a_{N\setminus S})}\sum_{i\in S}R_{i}(a_{S})$ be the set of maximizers when $S$ is choosing after $N\setminus S$ has chosen $a_{N\setminus S}$. Since the feasible set --and thus the payoff $S$ might receive after $N\setminus S$ has chosen-- depends on which maximizer $N\setminus S$ has picked, we define minimum and maximum values as follows:

\begin{equation*}
v_{\min }^{L}(S)=\max_{a_{S}\in f_{S}(a_{N\setminus
S})}\min_{a_{N\setminus S}\in \mu _{N\setminus S}\left( \ominus
_{S}\right) }\sum_{i\in S}R_{i}(a_{S})
\end{equation*}
and 
\begin{equation*}
v_{\max }^{L}(S)=\max_{a_{S}\in f_{S}(a_{N\setminus
S})}\max_{a_{N\setminus S}\in \mu _{N\setminus S}\left( \ominus
_{S}\right) }\sum_{i\in S}R_{i}(a_{S}).
\end{equation*}

Notice that the definitions of $v^{\alpha }$ and $v_{\min }^{L}$ are very similar, with the only distinction being that we suppose for $v_{\min }^{L}$ that $N\setminus S$ can only select from its set of maximizers, while in $v^{\alpha }$ it can select any actions from its feasible set. Thus, the distinction is similar to restricting to credible threats. $v_{\max }^{L}$ is more optimistic in that we suppose that $N\setminus S$ selects its maximizer that is most favorable to $S$.

\begin{example}\label{ex_drts}
We consider a joint production problem as in the previous example, with the added simplification that agents consume the good in discrete units. We can then describe utility functions by vectors of marginal utilities and the cost function by a vector of marginal costs. 

Suppose that $N=\left\{ 1,2,3\right\} $ and that the
common technology of production exhibits decreasing returns to scale.

More precisely, agent 1 has marginal utilities of 6 for the first unit, 3
for the second, and zero afterwards. Agent 2 has marginal utility of 12 for
the first unit, 6 for the second, and zero afterwards. Agent 3 has marginal
utility of 12 for the first unit, 8 for the second, 4 for the third, and
zero afterwards. The marginal cost of producing the $x^{th}$ unit is $x-1$.

Given that it's always optimal for a coalition to pay just enough to cover the cost of its consumption, we abuse notation slightly by writing the set of maximizers as the vectors of quantities, without specifying the cost paid. 

We obtain the following values for the games we have defined:

\begin{center}
\begin{tabular}{ccccc}
$S$ & $v^{\alpha }(S)$ & $v_{\min }^{L}(S)$ & $v_{\max }^{L}(S)$ & $%
v^{F}(S)=v^{\beta }(S)$ \\ 
$\left\{ 1\right\} $ & 0 & 1 & 2 & 8 \\ 
$\left\{ 2\right\} $ & 0 & 9 & 9 & 17 \\ 
$\left\{ 3\right\} $ & 0 & 11 & 13 & 21 \\ 
$\left\{ 1,2\right\} $ & 0 & 12 & 12 & 21 \\ 
$\left\{ 1,3\right\} $ & 0 & 17 & 17 & 24 \\ 
$\left\{ 2,3\right\} $ & 0 & 24 & 24 & 32 \\ 
$\left\{ 1,2,3\right\} $ & 34 & 34 & 34 & 34%
\end{tabular}
\end{center}

We explain some of these values in detail. First, consider $v^{F}(\left\{
2,3\right\} ).$ The coalition chooses first, and faces low marginal
costs. If it produces 5 units (2 for agent 2 and 3 for agent 3), it obtains $%
12-0+12-1+8-2+6-3+4-4=32.$ If it produces $4$ units (2 for agent 2 and 2 for
agent 3), it obtains $12-0+12-1+8-2+6-3=32.$ It is easy to see that any
other combination yields less net revenues. Thus, $\mu _{\left\{ 2,3\right\}
}\left( \ominus _{1}\right) =\left\{ (2,3),(2,2)\right\} $ and $%
v^{F}(\left\{ 2,3\right\} )=32.$

Now consider $v_{\min }^{L}(\left\{ 1\right\} ).$ We suppose that
coalition $\left\{ 2,3\right\} $ has selected an action that maximizes its own revenue. For $\{1\}$ the worst maximizer in $\mu _{\left\{ 2,3\right\} }\left( \ominus_{1}\right) $ is $(2,3),$ which means that $5$ units have already been produced, and that agent 1 now faces a marginal cost of 5 on the first unit consumed, 6 on the second, etc. Given that, it is optimal to consume a single unit for a net revenue of $6-5=1=v_{\min }^{L}(\left\{ 1\right\} ).$ Moving to $v_{\max }^{L}(\left\{ 1\right\} ),$ we now suppose that coalition $\left\{ 2,3\right\} $ has picked its maximizer that is most favorable to
agent 1, here $(2,2).$ Thus, agent 1 is now facing a marginal cost of 4 on
the first unit consumed, 5 on the second, etc. It is still optimal to
consume a single unit, but the net revenue is now $6-4=2=v_{\max}^{L}(\left\{ 1\right\} ).$

Finally, consider $v^{\alpha }(\left\{ 1\right\} ).$ How can coalition $%
\left\{ 2,3\right\} $ hurt the most agent 1 if it is not constrained by
choosing something that maximizes its own revenues? If it selects to consume
at least 6 units, then agent 1 faces a marginal cost of 6 of the first unit
consumed, and he is incapable of generating a strictly positive value, and
thus $v^{\alpha }(\left\{ 1\right\} )=0.$ Here, coalition $N\setminus S$
can always choose a quantity large enough to push $v^{\alpha }(S)=0$ for all 
$S\neq N$.
\end{example}

We can see that in the previous example $v^{\alpha }$ is not interesting because without the constraint of selecting a maximizer, we obtain values that are too low. In other words, $v^{\alpha}$ is too pessimistic as it considers the complement set taking non-credible actions.

We now formally link $v^{\alpha}$ and $v^{\beta}$ to the games where a coalition picks either first or last. The result is easily obtained and offered without proof.

\begin{proposition}
\mbox{}

\begin{itemize}
\item[(i)] For all $P\in \mathcal{P}^{-}$ and all $S\subseteq N$, we have $%
v^{\alpha }(S,P)\leq v_{min}^{L}(S,P)\leq v_{max}^{L}(S,P)\leq
v^{F}(S,P)=v^{\beta }(S,P)$.

\item[(ii)] For all $P\in \mathcal{P}^{+}$ and all $S\subseteq N$, we have $%
v^{\alpha }(S,P)=v^{\beta }(S,P)=v^{F}(S,P)\leq v_{min}^{L}(S,P)\leq
v_{max}^{L}(S,P)$. 
\end{itemize}
\end{proposition}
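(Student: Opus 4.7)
The plan is to reduce the six inequalities in the two chains to monotonicity of $\max$ in its feasible set, using only two ingredients: the set containments that define negative and positive externalities, and the trivial inclusion $\mu_{N\setminus S}(\ominus_S)\subseteq f_{N\setminus S}(\ominus_S)$. Throughout I would read every displayed double optimization with the outer operator being over $a_{N\setminus S}$ and the inner one over $a_S$, so for instance
\[
v_{\min}^L(S) \;=\; \min_{a_{N\setminus S}\in \mu_{N\setminus S}(\ominus_S)}\ \max_{a_S\in f_S(a_{N\setminus S})}\sum_{i\in S}R_i(a_S),
\]
and similarly for $v^{\alpha}$ and $v_{\max}^L$. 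Three facts are then immediate and hold in both parts: $v^F(S)=v^{\beta}(S)$ is the simplification of $v^{\beta}$ already on display in the text; $v_{\min}^L(S)\leq v_{\max}^L(S)$ applies $\min$ and $\max$ of the same integrand over the common set $\mu_{N\setminus S}(\ominus_S)$; and $v^{\alpha}(S)\leq v_{\min}^L(S)$ because $\mu_{N\setminus S}(\ominus_S)\subseteq f_{N\setminus S}(\ominus_S)$, so shrinking the set over which the outer min is taken can only raise the value.

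For part (i) the only remaining link is $v_{\max}^L(S)\leq v^F(S)$, and this is where $P\in\mathcal{P}^{-}$ finally enters. For each $a_{N\setminus S}\in \mu_{N\setminus S}(\ominus_S)\subseteq f_{N\setminus S}(\ominus_S)$, negative externalities give $f_S(a_{N\setminus S})\subseteq f_S(\ominus_{N\setminus S})$, whence the inner max is at most $v^F(S)$; taking the outer max over $a_{N\setminus S}$ preserves the bound.

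For part (ii) I would mirror the argument with the flipped containment $f_S(a_{N\setminus S})\supseteq f_S(\ominus_{N\setminus S})$, which now makes the inner max at least $v^F(S)$ at every feasible $a_{N\setminus S}$. Applied with the outer min over $\mu_{N\setminus S}(\ominus_S)$ this yields $v^F(S)\leq v_{\min}^L(S)$; applied with the outer min over the larger set $f_{N\setminus S}(\ominus_S)$ it yields $v^{\alpha}(S)\geq v^F(S)$. The only step where I would slow down is the reverse inequality $v^{\alpha}(S)\leq v^F(S)$ needed to close $v^{\alpha}=v^F$. For this I would invoke the standing assumption that the null action is always feasible, so $\ominus_{N\setminus S}\in f_{N\setminus S}(\ominus_S)$; plugging this particular choice into the outer min yields exactly $v^F(S)$, which upper-bounds $v^{\alpha}(S)$. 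Conceptually, the only real care needed throughout is to read the awkward double optimizations consistently; once that convention is fixed, each of the six inequalities is one line.
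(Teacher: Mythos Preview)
Your proposal is correct. The paper explicitly states that ``the result is easily obtained and offered without proof,'' so there is no argument to compare against; your reading of the double optimizations (outer in $a_{N\setminus S}$, inner in $a_S$, as forced by the constraint $a_S\in f_S(a_{N\setminus S})$) is the intended one, and the monotonicity-of-$\max$ steps you list are exactly the elementary verifications the authors omitted.
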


Notice that when we have positive externalities, the worst that $%
N\setminus S$ can do is to stay inactive, keeping the feasible set of $S$ as small as possible, and thus $v^{\alpha }(S,P)=v^{\beta }(S,P)=v^{F}(S,P).$

Given these results, we can see that $v^{\beta }$ is always equal to $v^{F}$, and that when we have positive externalities we also have that $v^{\alpha }$ is equal to $v^{F}.$ With negative externalities, $v^{\alpha }$ can differ from other games, but is singularly pessimistic in supposing that agents in $N\setminus S$ go out of their way to hurt $S$. Thus, $v^{\alpha}$ and $v^{\beta}$ do not seem like the right approaches in our set of problems, with first/last to choose being better approaches.

But, more generally, we are interested in defining lower and upper bounds on the
value a coalition can generate. We can see in the previous result that with negative externalities, choosing first is favorable scenario, and offers an upper bound, while choosing last is an unfavorable scenario that offers a lower bound. The ranking is flipped with positive externalities.

But, ideally, we'd like to consider the most-favorable and least-favorable scenarios. Do these align with choosing first or last?

Formally,  under the \emph{optimistic approach}, each coalition is assigned a value corresponding to a best-case scenario which we can interpret as an upper bound on the value coalition $S$ can achieve. Thus, the relevant concept to study is the anti-core under the
optimistic approach, as it is the set of efficient payoff vectors that
assigns to each coalition at most its value.

By opposition, in the \emph{pessimistic approach} each coalition is assigned a value corresponding to a
worst-case scenario, which we can interpret as a lower bound on the value coalition $S$ can achieve. Thus, the relevant concept to study is the core under the pessimistic approach, as it is the set of efficient payoff vectors that assigns to each coalition at least as much as its value.

We follow two principles when defining these coalitional values. First, we suppose that the value of coalition $S$ is defined through a maximization over its own actions, although this maximization might be constrained by the actions taken by $N\setminus S$. Without this assumption, for the pessimistic value, it might be possible to find a very low value associated to coalition $S$ by having coalition $S$ choose a particularly self-harmful set of actions. We rule out this form of self-sabotage.

Secondly, we suppose that if some outside agents have taken actions before $S$, they have taken actions that are optimal for them, as opposed to have taken them with the express goal of hurting or benefiting $S$. Therefore, our approach is similar to subgame perfect Nash equilibrium in
non-cooperative game theory, in which we suppose that everyone's actions is
optimal in every subgame, thus ruling out non-credible threats, and, in particular, $v^{\alpha }$. For instance, \cite{ab22} define a pessimistic value for knapsack problems by supposing that outsiders pick the worst possible combination of actions for $S$, even if this combination is not optimal for the outsiders. This is what is considered in our approach as a non-credible threat and ruled out by this second principle. In the same way, when defining the optimistic value with positive externalities, one could
suppose that outsiders take a particularly advantageous set of actions for $S$, but such non-credible gesture is ruled out by our second principle.

How do these optimistic and pessimistic games relate to $v^{F},$ $v_{\min}^{L}$ and $v_{\max }^{L}?$ In theory, to define our optimistic and pessimistic values, we could have the members of $S$ pick their actions in any possible order, some first, some after some members of $N\setminus S$ but before others, and some last. Our two principles help narrow the search. First, if we allow agents in $N\setminus S$ to take some actions, by our second principle, we cannot suppose that only some of them take actions. Supposing that only some of them take actions is akin to supposing a particularly hurtful/beneficial set of actions. Once we let agents in $N\setminus S$ take actions, they do so for their own benefit. While some of its members can choose the null action, $N\setminus S$ cannot delay its decision further. In addition, given the cooperative nature of the problem, we suppose that the actions taken by $N\setminus S$ are jointly optimal for them.  As in \cite{hs03}, each coalition acts ``together'' in order to maximize their joint revenues. Moreover, each coalition $S$ chooses its action knowing that $N\setminus S$ chooses its action consistent with the problem and tries to maximize its own joint revenues. 

Therefore, the problem for coalition $S$ consists in determining its subset $T\subseteq S$ that will pick first. Then, the outside agents in $N\setminus S$ will jointly choose their actions, and agents in $S\setminus T$ choose last. Of course, if $T=S$, all agents in $S$ choose first, and if $T=\emptyset$, all agents in $S$ choose last.


As when we defined $v^L_{min}$ and $v^L_{max}$, while we suppose that $N\setminus S$ is optimizing its joint
revenue under the constraint imposed by the choice of $T$, there might be multiple maximizers, and which one is chosen might affect how much $%
S\setminus T$ might obtain afterwards. We thus define two interesting
marginal contributions for $S$ if $T$ chooses first and $S\setminus T$
chooses last, depending if we suppose that the maximizer chosen by $%
N\setminus S$ is best or worst for $S\setminus T$. 
Then, for any $%
T\subseteq S\subseteq N$ we define:\footnote{
We suppose, in this problem and in subsequent ones, that the optimization
problem for coalition $S\subset N$ has a solution.}

\begin{equation*}
v^{T\subseteq S}_{max}(P)=\max_{a_T\in f_T\left(\ominus_{N\setminus
T}\right)}\left(\sum_{i\in T}R_i(a_T)+\max_{a_{N\setminus S}\in
\mu_{N\setminus S}\left(\{a_T,\ominus_{S\setminus
T}\}\right)}\max_{a_{S\setminus T}\in f_{S\setminus T}\left(a_T,
a_{N\setminus S}\right)}\sum_{i\in S\setminus T}R_i(a_{S\setminus T})\right) 
\end{equation*}

and

\begin{equation*}
v^{T\subseteq S}_{min}(P)=\max_{a_T\in f_T\left(\ominus_{N\setminus
T}\right)}\left(\sum_{i\in T}R_i(a_T)+\min_{a_{N\setminus S}\in
\mu_{N\setminus S}\left(\{a_T,\ominus_{S\setminus
T}\}\right)}\max_{a_{S\setminus T}\in f_{S\setminus T}\left(a_T,
a_{N\setminus S}\right)}\sum_{i\in S\setminus T}R_i(a_{S\setminus
T})\right). 
\end{equation*}

We make two remarks: First, there are three steps to this maximization. $T$
picks first, then its actions are seen by $N\setminus S$ who then maximizes its revenues, and finally $S\setminus T$ picks, having observed actions of
the first two groups. Second, notice that we still suppose joint
maximization for $S$, with $T\subseteq S$ anticipating the impact of its actions on $N\setminus S$ and the subsequent impact of these decisions on $S\setminus T$.

Naturally, we define the optimistic value of $S$ as its largest marginal
contribution as defined above, while the pessimistic value of $S$ is its
smallest marginal contribution. Formally, for all $S\subseteq N$
\begin{equation*}
v^o(S,P)=\max_{T\subseteq S} v^{T\subseteq S}_{max}(P) 
\end{equation*}
and 
\begin{equation*}
v^p(S,P)=\min_{T\subseteq S} v^{T\subseteq S}_{min}(P). 
\end{equation*}


It is easy to see that when $T=S$, we recover $v^{F}$: $v^{F}(P)\equiv
v_{max}^{S\subseteq S}(P)=v_{min}^{S\subseteq S}(P)$, as which maximizer is
chosen by $N\setminus S$ is irrelevant, as nobody chooses after them. When $%
T=\emptyset $, all agents in $S$ chooses last. The maximizer chosen by $%
N\setminus S$ now matters, and we recover $v_{max}^{L}(P)\equiv
v_{max}^{\emptyset \subseteq S}(P)$ and $v_{min}^{L}(P)\equiv
v_{min}^{\emptyset \subseteq S}(P)$.
\medskip\\

While choosing first/last yields natural bounds, are they always the lower and upper bounds? In the previous example, it is the case. The best case scenario for a coalition is to choose its consumption first, when the marginal costs are low. The worst case is choosing last, when the marginal costs are higher.

The following example, in which we flip from decreasing to increasing returns to scale, shows that it is not always the case. 

\begin{example}\label{ex_irts}

We modify the previous example to suppose increasing returns to scale in
production. Suppose the same marginal utilities, but now the marginal cost
of production is 14 for the first unit, 9 for the second, 7 for the third, 3
for the fourth and 1 afterwards.

We obtain the following values:
\begin{center}
\begin{tabular}{cccc}
$S$ & $v^{F}(S)$ & $v_{\min }^{L}(S)=v_{\max }^{L}(S)$ & $v^{o}(S)$ \\ 
$\left\{ 1\right\} $ & 0 & 7 & 7 \\ 
$\left\{ 2\right\} $ & 0 & 0 & 0 \\ 
$\left\{ 3\right\} $ & 0 & 0 & 0 \\ 
$\left\{ 1,2\right\} $ & 0 & 0 & 8 \\ 
$\left\{ 1,3\right\} $ & 0 & 0 & 11 \\ 
$\left\{ 2,3\right\} $ & 8 & 8 & 10 \\ 
$\left\{ 1,2,3\right\} $ & 15 & 15 & 15%
\end{tabular}
\end{center}

We explain these values for $\left\{ 1\right\}$ and $\left\{ 2,3\right\}$.
If agent 1 has to choose first, it faces too high marginal costs, and it
consumes nothing, and $v^{F}(\left\{ 1\right\} )=0.$ Coalition $\left\{
2,3\right\} $, acting first, will pick $(2,3),$ to generate a net surplus of 
$12-14+12-9+8-7+6-3+4-1=8=v^{F}(\left\{ 2,3\right\} ).$

Now consider $v_{\min }^{L}(\left\{ 1\right\} ).$ We have that coalition $%
\left\{ 2,3\right\} $ has picked $(2,3),$ so agent 1 is now facing marginal
costs of 1. He consumes 2 units for a gain of $6-1+3-1=7=v_{\min
}^{L}(\left\{ 1\right\} )=v_{\max }^{L}(\left\{ 1\right\} ).$ Since $\left\{
1\right\} ,$ when alone, doesn't consume, we have that $v_{\min
}^{L}(\left\{ 2,3\right\} )=v_{\max }^{L}(\left\{ 2,3\right\}
)=v^{F}(\left\{ 2,3\right\} )=8.$

But $\left\{ 2,3\right\} $ can do better. Consider $v_{\min }^{\left\{
3\right\} \subseteq \left\{ 2,3\right\} }.$ If $3$ picks first and consumes
three units, it generates $12-14+8-9+4-7=-6$ but it decreases the marginal
cost for agent 1, who then consumes 2 units, leaving a marginal cost of 1
for agent 2, who consumes 2, generating $12-1+6-1=16,$ and thus $v^{o}\left(
\left\{ 2,3\right\} \right) =-6+16=10$. 
\end{example}
The previous example shows that we do not have a direct correspondence between first/last and optimistic/pessimistic. But since choosing first/last is considerably simpler than looking for the
best/worst case scenarios more generally, in the next section we explore
when they correspond to the optimistic/pessimistic scenarios.

\section{Main results}

\label{sec:main}

We now provide our main results. First, we establish a relationship between the games defined according to pessimistic/optimistic approaches and
coalitions choosing first/last. 

\begin{proposition}
\label{prop:v} \mbox{}

\begin{itemize}
\item[(i)] For all $P\in \mathcal{P}^{-}$ and all $S\subseteq N$, we have $%
v^p(S, P) \leq v^L_{min}(S, P) \leq v^F(S, P) \leq v^o(S, P)$. For all $P\in 
\mathcal{P}^{-}$ we have $\mathcal{C}(v^L_{min}(\cdot, P))\subseteq \mathcal{%
C}(v^p(\cdot, P))$ and $\mathcal{A}(v^F(\cdot, P))\subseteq \mathcal{A}%
(v^o(\cdot, P))$. 

\item[(ii)] For all $P\in \mathcal{P}^{+}$ and all $S\subseteq N$, we have $%
v^p(S, P) \leq v^F(S, P) \leq v^L_{max}(S, P) \leq v^o(S, P)$. For all $P\in 
\mathcal{P}^{+}$ we have $\mathcal{C}(v^F(\cdot, P))\subseteq \mathcal{C}%
(v^p(\cdot, P))$ and $\mathcal{A}(v^L_{max}(\cdot, P))\subseteq \mathcal{A}%
(v^o(\cdot, P))$.
\end{itemize}
\end{proposition}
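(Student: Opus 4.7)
The plan is to establish the four inequalities in each part first, and then derive the core/anti-core inclusions from them by a standard monotonicity argument. Several of the inequalities are immediate from the definitions: since $v^{S\subseteq S}_{max}(P)=v^{S\subseteq S}_{min}(P)=v^F(S,P)$, $v^{\emptyset\subseteq S}_{min}(P)=v^L_{min}(S,P)$, and $v^{\emptyset\subseteq S}_{max}(P)=v^L_{max}(S,P)$, taking a particular $T\subseteq S$ in the min/max that defines $v^p$ and $v^o$ yields, in part (i), both $v^p(S,P)\le v^L_{min}(S,P)$ (use $T=\emptyset$) and $v^F(S,P)\le v^o(S,P)$ (use $T=S$); and, in part (ii), both $v^p(S,P)\le v^F(S,P)$ (use $T=S$) and $v^L_{max}(S,P)\le v^o(S,P)$ (use $T=\emptyset$).

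The two genuinely substantive inequalities are the middle ones in each part, and they are where the externality assumption enters. For part (i), I would argue $v^L_{min}(S,P)\le v^F(S,P)$ by fixing any $a_{N\setminus S}\in \mu_{N\setminus S}(\ominus_S)\subseteq f_{N\setminus S}(\ominus_S)$ and invoking negative externalities to conclude $f_S(a_{N\setminus S})\subseteq f_S(\ominus_{N\setminus S})$; then the inner maximum $\max_{a_S\in f_S(a_{N\setminus S})}\sum_{i\in S}R_i(a_S)$ is bounded above by $v^F(S,P)$, and taking the minimum over $a_{N\setminus S}\in \mu_{N\setminus S}(\ominus_S)$ preserves this bound. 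For part (ii), the mirror-image argument gives $v^F(S,P)\le v^L_{max}(S,P)$: for each $a_{N\setminus S}\in \mu_{N\setminus S}(\ominus_S)$, positive externalities give $f_S(a_{N\setminus S})\supseteq f_S(\ominus_{N\setminus S})$, so the feasible set for $S$ can only grow, the inner maximum weakly increases, and taking the outer maximum over $a_{N\setminus S}$ finishes.

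With the pointwise orderings in hand, the core and anti-core inclusions are standard: a pointwise-smaller value function has a weakly larger core, and a pointwise-larger value function has a weakly larger anti-core. The one subtlety is efficiency: we must check that all four value functions coincide at $S=N$. This follows because $N\setminus N=\emptyset$, so $f_N(\ominus_\emptyset)=f_N$ is the only feasible set in play, every definition collapses to $\max_{a_N\in f_N}\sum_{i\in N}R_i(a_N)$, and hence $v^p(N,P)=v^L_{min}(N,P)=v^F(N,P)=v^L_{max}(N,P)=v^o(N,P)$. This ensures that an efficient allocation for one of these games is efficient for the paired game, and combined with the inequalities gives both $\mathcal{C}(v^L_{min})\subseteq \mathcal{C}(v^p)$ and $\mathcal{A}(v^F)\subseteq \mathcal{A}(v^o)$ in (i), and $\mathcal{C}(v^F)\subseteq \mathcal{C}(v^p)$ and $\mathcal{A}(v^L_{max})\subseteq \mathcal{A}(v^o)$ in (ii).

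The main obstacle is bookkeeping the nested min/max in $v^L_{min}$, $v^L_{max}$, $v^p$, and $v^o$, and in particular confirming that the vector $a_{N\setminus S}$ over which the inner optimization ranges lies in $f_{N\setminus S}(\ominus_S)$, so that the externality property applies in the right direction. Once that is clarified, each step reduces to set-inclusion of feasible sets and monotonicity of the max operator. No use of feasibility complementarity is needed here beyond ensuring that the joint actions prescribed by the sequential protocol are well-defined.
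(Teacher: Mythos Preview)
Your proposal is correct and follows exactly the approach the paper indicates: the paper does not give a formal proof but simply remarks that the outer inequalities are ``obvious'' because $v^p$ and $v^o$ range over all $T\subseteq S$ (so $T=\emptyset$ and $T=S$ are admissible choices), while the middle inequality $v^L\lessgtr v^F$ follows because the externality assumption makes the feasible set for $S$ larger or smaller when picking first versus last. Your write-up spells out these steps and adds the necessary check that all four games agree at $N$ so that efficiency is preserved across the core/anti-core comparisons; this is a clean and complete version of the paper's informal justification.
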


Since $v^p$ and $v^o$ look at all possibilities for the coalition, including picking first or picking last, the relationships with $v^F$ and $v^L$ are
obvious. The relationship between $v^F$ and $v^L$ depend on the
externalities. With negative externalities, the feasible set is larger when
picking first compared to last, while the opposite is true if we have
positive externalities.

It turns out that when we have externalities that are always of the same
sign, there is a close relationship between first/last and the
optimistic/pessimistic values.

\begin{theorem}
\label{v^T}
\begin{itemize}
\item[(i)] For all $P\in \mathcal{P^-}$, we have $v^{o}(S,P)=v^F(S,P)$ for
all $S\subset N$. 

\item[(ii)] For all $P\in \mathcal{P^-}$, we have $v^p(S,P)=v^L_{min}(S,P)$
for all $S\subset N$. 

\item[(iii)] For all $P\in \mathcal{P^+}$, we have $v^{p}(S,P)=v^F(S,P)$ for
all $S\subset N$.
\end{itemize}
\end{theorem}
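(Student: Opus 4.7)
The plan is to prove each equality by exhibiting a specific witness strategy and verifying its feasibility using feasibility complementarity, the axiom that $\ominus_R$ is always feasible for any coalition $R$, and the appropriate sign of the externalities. Proposition \ref{prop:v} supplies the easy directions: $v^F\le v^o$ for (i), $v^p\le v^L_{min}$ for (ii), and $v^p\le v^F$ for (iii); only the reverse inequalities must be established.

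For part (i), fix $T\subseteq S$ and an optimizer $(a_T^\dagger,a_{N\setminus S}^\dagger,a_{S\setminus T}^\dagger)$ of $v^{T\subseteq S}_{max}(P)$, and set $a_S:=(a_T^\dagger,a_{S\setminus T}^\dagger)$. Feasibility complementarity gives $(a_S,a_{N\setminus S}^\dagger)\in f_N$; using the null-action axiom to erase the $S\setminus T$ coordinate, $(a_T^\dagger,\ominus_{S\setminus T},a_{N\setminus S}^\dagger)\in f_N$ as well, so $a_{N\setminus S}^\dagger\in f_{N\setminus S}(a_T^\dagger,\ominus_{S\setminus T})$. One application of negative externalities (to the coalition $N\setminus T$) gives $a_{N\setminus S}^\dagger\in f_{N\setminus S}(\ominus_S)$; a second application (now to $S$) transports $a_S$ from $f_S(a_{N\setminus S}^\dagger)$ into $f_S(\ominus_{N\setminus S})$. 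The revenues match, so $v^{T\subseteq S}_{max}(P)=\sum_{i\in S}R_i(a_S)\le v^F(S,P)$; taking $\max_T$ completes (i).

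For part (iii), let $a_S^F$ optimize $v^F(S,P)$ and commit $T$ to $a_T^*:=a_S^F|_T$. For any $\tilde a_{N\setminus S}\in\mu_{N\setminus S}(a_T^*,\ominus_{S\setminus T})$, positive externalities in the $S\setminus T$-direction give $f_{N\setminus S}(a_S^F|_T,\ominus_{S\setminus T})\subseteq f_{N\setminus S}(a_S^F|_T,a_S^F|_{S\setminus T})=f_{N\setminus S}(a_S^F)$, so $(a_S^F,\tilde a_{N\setminus S})\in f_N$ and $a_S^F|_{S\setminus T}$ is an admissible step-3 response. The inner max is therefore bounded below by $\sum_{i\in S\setminus T}R_i(a_S^F|_{S\setminus T})$ uniformly in $\tilde a_{N\setminus S}$, and the outer bracket at $a_T^*$ is at least $\sum_{i\in S}R_i(a_S^F)=v^F(S,P)$. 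Thus $v^{T\subseteq S}_{min}(P)\ge v^F(S,P)$ for every $T$, completing (iii).

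Part (ii) is the most delicate and is where I expect the main difficulty. Mirroring (iii), let $(a_{N\setminus S}^*,a_S^*)$ optimize $v^L_{min}(S,P)$ and commit $T$ to $a_T^*:=a_S^*|_T$. Negative externalities combined with the null-action axiom force $\mu_{N\setminus S}(a_T^*,\ominus_{S\setminus T})\subseteq\mu_{N\setminus S}(\ominus_S)$ (the feasibility shrinks, but $a_{N\setminus S}^*$ witnesses that the max value is preserved). The main obstacle is to verify that for every $\tilde a_{N\setminus S}$ in this smaller set, $S\setminus T$ can still attain $\sum_{i\in S\setminus T}R_i(a_S^*|_{S\setminus T})$: the natural candidate $a_S^*|_{S\setminus T}$ is feasible precisely when $(a_S^*,\tilde a_{N\setminus S})\in f_N$, but negative externalities supply only the reverse containment. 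I would handle this either by selecting $(a_{N\setminus S}^*,a_S^*)$ consistently with the worst-case tiebreaker in the $T$-restricted game, or by replacing $a_S^*|_{S\setminus T}$ with the restriction of some $\bar a_S\in f_S(\tilde a_{N\setminus S})$ satisfying $\sum_i R_i(\bar a_S)\ge\sum_i R_i(a_S^*)$ (which exists because $\tilde a_{N\setminus S}\in\mu_{N\setminus S}(\ominus_S)$) and then showing via feasibility complementarity that the $T$-coordinate can be swapped back to $a_T^*$. This feasibility-transfer step is the substantive work in (ii).
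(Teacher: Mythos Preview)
Your treatment of (i) and (iii) is essentially the paper's, although the execution differs. In (i) the ``erase'' detour is unnecessary: $a_{N\setminus S}^\dagger\in\mu_{N\setminus S}(a_T^\dagger,\ominus_{S\setminus T})\subseteq f_{N\setminus S}(a_T^\dagger,\ominus_{S\setminus T})$ already gives $(a_T^\dagger,\ominus_{S\setminus T},a_{N\setminus S}^\dagger)\in f_N$ by feasibility complementarity. The paper avoids even this by working directly on the inner feasible set: it uses $f_{S\setminus T}(a_T,a_{N\setminus S})\subseteq f_{S\setminus T}(a_T,\ominus_{N\setminus S})$ to bound the inner $\max$, then collapses the double $\max$ to $v^F(S)$. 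In (iii) you again pick a witness, but you transport feasibility on the $N\setminus S$ side (claiming $f_{N\setminus S}(a_T^*,\ominus_{S\setminus T})\subseteq f_{N\setminus S}(a_S^F)$); the paper instead bounds the \emph{inner} feasible set $f_{S\setminus T}(a_T,a_{N\setminus S})\supseteq f_{S\setminus T}(a_T,\ominus_{N\setminus S})$ and again collapses the double $\max$. Via feasibility complementarity the paper's containment reduces to $f_S(\ominus_{N\setminus S})\subseteq f_S(a_{N\setminus S})$, i.e.\ the stated positive-externality axiom, whereas your containment is a genuinely ``partial'' monotonicity that does not fall out of the axiom as written; so the paper's route is cleaner here.

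For (ii) your plan coincides with the paper's: commit $T$ to $(a_S^L)_T$, observe via negative externalities that $f_{N\setminus S}((a_S^L)_T,\ominus_{S\setminus T})\subseteq f_{N\setminus S}(\ominus_S)$, and via feasibility complementarity (from $(a_S^L,a_{N\setminus S}^{\emptyset\subseteq S,-})\in f_N$ together with the null action) that $a_{N\setminus S}^{\emptyset\subseteq S,-}$ is still feasible---hence still optimal---for $N\setminus S$ after $T$'s commitment, so $\mu_{N\setminus S}((a_S^L)_T,\ominus_{S\setminus T})\subseteq\mu_{N\setminus S}(\ominus_S)$. The paper then merges the two maxima into a single $\max_{a_S\in f_S(a^{T\subseteq S,-}_{N\setminus S}(a_T))}\sum_iR_i(a_S)$ and compares it to the $T=\emptyset$ case; the justification offered is precisely that $T$ ``can always induce $N\setminus S$ to choose what it would choose if it picked first,'' i.e.\ $a_{N\setminus S}^{\emptyset\subseteq S,-}$ remains in the maximizer set. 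Your first proposed fix (select the $v^L_{min}$ optimum consistently with the $T$-restricted tiebreak) is exactly this move; your second fix (swap the $T$-coordinate back) is not needed and, as you suspected, is not supported by the axioms. So you have correctly located where the work is and the resolution the paper gives; your worry that the worst-case tiebreaker in the smaller maximizer set could differ from $a_{N\setminus S}^{\emptyset\subseteq S,-}$ is legitimate, and the paper's proof is terse on this point as well.
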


\begin{proof}
(i) Suppose that $P\in \mathcal{P^-}$. We have that 
\begin{eqnarray*}
v^{T\subseteq S}_{max}(P)&=&\max_{a_T\in f_T\left(\ominus_{N\setminus
T}\right)}\left(\sum_{i\in T}R_i(a_T)+\max_{a_{N\setminus S}\in
\mu_{N\setminus S}\left(\{a_T,\ominus_{S\setminus
T}\}\right)}\max_{a_{S\setminus T}\in f_{S\setminus T}\left(a_T,
a_{N\setminus S}\right)}\sum_{i\in S\setminus T}R_i(a_{S\setminus T})\right)
\\
&\leq &\max_{a_T\in f_T\left(\ominus_{N\setminus T}\right)}\left(\sum_{i\in
T}R_i(a_T)+\max_{a_{S\setminus T}\in f_{S\setminus T}\left(a_T,
\ominus_{N\setminus S}\right)}\sum_{i\in S\setminus T}R_i(a_{S\setminus
T})\right) \\
&=&\max_{a_{S}\in f_{S}\left( \ominus _{N\setminus S}\right) }\sum_{i\in
S}R_{i}\left( a_{S}\right) \\
&=&v^{F}(S,P)
\end{eqnarray*}%
where the inequality is because $f_{S\setminus T}\left( \left\{
a_{T},a_{N\setminus S})\right\} \right) \subseteq f_{S\setminus T}\left(
\left\{ a_{T},\ominus _{N\setminus S}\right\} \right) $ for all $%
a_{N\setminus S}\in f_{N\setminus S}\left(\{a_T,\ominus_{S\setminus
T}\}\right)$ by negative externalities.

(ii) Suppose that $P\in \mathcal{P^-}$. Let $a^{T\subseteq S,-}_{N\setminus
S}(a_T)$ be among the maximizers for $N\setminus S$ choosing after $T$ that
is the worst for $S\setminus T$.

We have that
\begin{eqnarray*}
v^{T\subseteq S}_{min}(P)&=&\max_{a_T\in f_T\left(\ominus_{N\setminus
T}\right)}\left(\sum_{i\in T}R_i(a_T)+\min_{a_{N\setminus S}\in
\mu_{N\setminus S}\left(\{a_T,\ominus_{S\setminus
T}\}\right)}\max_{a_{S\setminus T}\in f_{S\setminus T}\left(a_T,
a_{N\setminus S}\right)}\sum_{i\in S\setminus T}R_i(a_{S\setminus T})\right)
\\
&=&\max_{a_T\in f_T\left(\{a^{T\subseteq S,-}_{N\setminus S}(a_T),
\ominus_{S\setminus T}\}\right)}\left(\sum_{i\in
T}R_i(a_T)+\max_{a_{S\setminus T}\in f_{S\setminus T}\left(a_T,
a^{T\subseteq S,-}_{N\setminus S}(a_T)\right)}\sum_{i\in S\setminus
T}R_i(a_{S\setminus T})\right) \\
&=&\max_{a_S\in f_S\left(a^{T\subseteq S,-}_{N\setminus
S}(a_T)\right)}\sum_{i\in S}R_i(a_S) \\
&\geq &\max_{a_{S}\in f_{S}\left( a^{T\subseteq S,-}_{N\setminus
S}(\ominus_T)\right) }\sum_{i\in S}R_{i}\left( a_{S}\right) \\
&= &\max_{a_{S}\in f_{S}\left( a^{\emptyset\subseteq S,-}_{N\setminus
S}\right)}\sum_{i\in S}R_{i}\left( a_{S}\right) \\
&=&\min_{a_{N\setminus S}\in\mu_{N\setminus S}(\ominus_S)}\max_{a_{S}\in
f_{S}\left( a_{N\setminus S}\right)}\sum_{i\in S}R_{i}\left( a_{S}\right) \\
&=&v^{L}_{min}(S,P)
\end{eqnarray*}
where the second equality follows from feasibility complementarity. To see why the inequality holds, let $a^L_S$ be the actions taken by $S$ if it chooses last, and $N\setminus S$ has chosen $a^{\emptyset\subseteq S,-}_{N\setminus S}$. On the left hand side of the inequality, $T$ could choose any action, while on the right hand side, it chooses $\ominus_T$. But by negative externalities, $f_{N\setminus
S}\left( \{a_T, \ominus_{S\setminus T}\}\right)\subseteq f_{N\setminus
S}\left( \ominus_{S}\right)$. In addition, if $T$ chooses $\left(
\left(a^L_S\right)_T\right)$, then, by feasibility complementarity, $a^{\emptyset\subseteq S,-}_{N\setminus
S}\in f_{N\setminus S}\left( \left(a^L_S\right)_T\right)$, and thus is still chosen by $N\setminus S$. In other
words, when $T$ can choose actions that might influence $N\setminus S$, it
can always induce it to choose what it would choose if it picked first. If $T
$ picks anything else, it must give at least as large revenues, which
concludes the proof.

(iii) Suppose that $P\in \mathcal{P^+}$.

We have that, for all $T\subseteq S$, 
\begin{eqnarray*}
v^{T\subseteq S}_{min}(P)&=&\max_{a_T\in f_T\left(\ominus_{N\setminus
T}\right)}\left(\sum_{i\in T}R_i(a_T)+\min_{a_{N\setminus S}\in
\mu_{N\setminus S}\left(\{a_T,\ominus_{S\setminus
T}\}\right)}\max_{a_{S\setminus T}\in f_{S\setminus T}\left(a_T,
a_{N\setminus S}\right)}\sum_{i\in S\setminus T}R_i(a_{S\setminus T})\right)
\\
&\geq &\max_{a_T\in f_T\left(\ominus_{N\setminus T}\right)}\left(\sum_{i\in
T}R_i(a_T)+\max_{a_{S\setminus T}\in f_{S\setminus T}\left(a_T,
\ominus_{N\setminus S}\right)}\sum_{i\in S\setminus T}R_i(a_{S\setminus
T})\right) \\
&=&\max_{a_{S}\in f_{S}\left( \ominus _{N\setminus S}\right) }\sum_{i\in
S}R_{i}\left( a_{S}\right) \\
&=&v^{F}(S,P)
\end{eqnarray*}%
where the inequality is because $f_{S\setminus T}\left( \left\{
a_{T},a_{N\setminus S})\right\} \right) \supseteq f_{S\setminus T}\left(
\left\{ a_{T},\ominus _{N\setminus S}\right\} \right) $ for all $%
a_{N\setminus S}\in f_{N\setminus S}\left(\{a_T,\ominus_{S\setminus
T}\}\right)$ by positive externalities.
\end{proof}

As seen in the previous example, we do have counterexamples where $%
v^{o}(S)>v_{\max }^{L}(S)$  under positive feasibility externalities, unlike negative feasibility externalities.

While this shows that for positive externalities, choosing last is not necessarily the most optimistic value, we still have many applications in which it is the case. As seen in the counterexample and in the proof, it is only when a coalition can alter in its favor the behavior of the complement set that it might have a better option than choosing last.

We next show another nice feature of the first/last games. If we use them as lower/upper bounds, then we immediately obtain an inclusion result: the set of allocations making sure that nobody receives more than their upper bounds is a subset of the set of allocations making sure that nobody receives less than their lower bounds.



\begin{theorem}
\label{main} For all $P\in \mathcal{P}$ 
we have that

\begin{enumerate}[(i)]

\item $\mathcal{A}\left( v^{F}(\cdot,P)\right) \subseteq \mathcal{C}\left(
v^{L}_{min}(\cdot,P)\right)$;

\item $\mathcal{A}\left( v^{L}_{max}(\cdot,P)\right) \subseteq \mathcal{C}%
\left(v^{F}(\cdot,P)\right)$; 
\end{enumerate}
\end{theorem}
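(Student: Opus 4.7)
The plan is to reduce both inclusions to a pair of structural inequalities that relate $v^F$ and $v^L$ on the coalition and its complement. For (i), take $x \in \mathcal{A}(v^F(\cdot,P))$ and $S \subsetneq N$; writing $x(S) = x(N) - x(N\setminus S)$ and using $x(N) = v^F(N,P)$ together with $x(N\setminus S) \le v^F(N\setminus S,P)$ reduces the desired inequality $x(S) \ge v^L_{\min}(S,P)$ to
\begin{equation*}
v^F(N,P) \ge v^F(N\setminus S,P) + v^L_{\min}(S,P).
\end{equation*}
For (ii), the analogous expansion for $x \in \mathcal{A}(v^L_{\max}(\cdot,P))$ reduces $x(S) \ge v^F(S,P)$ to
\begin{equation*}
v^F(N,P) \ge v^F(S,P) + v^L_{\max}(N\setminus S,P).
\end{equation*}
Efficiency on $N$ is automatic because $v^F(N,P) = v^L_{\min}(N,P) = v^L_{\max}(N,P)$: when $S = N$ the complement is empty and all three definitions collapse to the grand-coalition optimization.

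To establish the first structural inequality, I would pick $a^*_{N\setminus S} \in \mu_{N\setminus S}(\ominus_S)$ attaining the outer $\min$ in the definition of $v^L_{\min}(S,P)$ and let $a^*_S \in f_S(a^*_{N\setminus S})$ be $S$'s optimal response, so that $\sum_{i \in N\setminus S} R_i(a^*_{N\setminus S}) = v^F(N\setminus S,P)$ and $\sum_{i \in S} R_i(a^*_S) = v^L_{\min}(S,P)$. Feasibility complementarity then promotes $a^*_S \in f_S(a^*_{N\setminus S})$ to $(a^*_S, a^*_{N\setminus S}) \in f_N$, so this profile is available to the grand coalition and delivers the inequality. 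The symmetric argument for (ii) fixes $a^*_S \in \mu_S(\ominus_{N\setminus S})$ realizing the outer $\max$ in $v^L_{\max}(N\setminus S,P)$, picks $a^*_{N\setminus S} \in f_{N\setminus S}(a^*_S)$ as its best response, and again uses feasibility complementarity to glue the two pieces into an $f_N$-feasible profile.

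The main subtlety to highlight is that neither construction invokes a sign on externalities: feasibility complementarity alone suffices to turn a sequential ``first then last'' profile into one jointly feasible for $N$. That is precisely why the theorem holds on all of $\mathcal{P}$, without restricting to $\mathcal{P}^-$ or $\mathcal{P}^+$. The place to be careful is in choosing the witness from $\mu_{N\setminus S}(\ominus_S)$ (respectively $\mu_S(\ominus_{N\setminus S})$): it must realize the outer extremum in $v^L_{\min}$ (respectively $v^L_{\max}$), so that the two revenue terms align exactly with the right-hand side of the corresponding structural inequality rather than giving a strictly smaller sum.
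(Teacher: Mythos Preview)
Your proposal is correct and follows essentially the same route as the paper: both reduce each inclusion to the structural inequality $v^F(N,P)\ge v^F(S',P)+v^L(N\setminus S',P)$ (for the appropriate version of $v^L$), exhibit a witnessing profile by letting the first mover optimize and the second mover best-respond, and invoke feasibility complementarity to place the combined profile in $f_N$. Your handling of part~(ii) is in fact slightly more explicit than the paper's, which simply points back to part~(i); you correctly flag that the witness in $\mu_S(\ominus_{N\setminus S})$ must realize the outer $\max$ so that the second mover's payoff equals $v^L_{\max}$ rather than merely bounding it.
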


\begin{proof}
We start with part (\textbf{i}). 
Fix $P,$ and thus write $v^{F}(S)$ and $v^{L}_{min}(S)$ instead of $v^{F}(S,P)$
and $v^{L}_{min}(S,P).$ Let $v(N)\equiv \max_{a_{N}\in f_{N}}\sum_{i\in
N}R_{i}(a^{N}).$ Notice that $v^{F}(N)=v^{L}_{min}(N)=v(N).$

An allocation $x\in \mathcal{A}\left( v^{F}\right) $ if $v(N)-v^{F}(N%
\setminus S)\leq x(S)\leq v^{F}(S)$ for all $S.$ An allocation $x\in 
\mathcal{C}\left( v^{L}_{min}\right) $ if $v^{L}_{min}(S)\leq x(S)\leq
v(N)-v^{L}_{min}(N\setminus S).$ It is easy to see that $\mathcal{A}\left(
v^{F}\right) \subseteq \mathcal{C}\left( v^{L}_{min}\right) $ if and only if 
$v(N)\geq v^{F}(S)+v^{L}_{min}(N\setminus S)$ for all $S.$ Fix $S$ and let ${%
a}_{N}^{\ast }$ be (one of) the optimal set(s) of actions taken by $N.$
Then, $v(N)=\sum\limits_{i\in N}R_{i}({a}_{N}^{\ast }),$ $%
v^{F}(S)=\sum\limits_{i\in S}R_{i}\left( a^-_{F(S)}\right)$ and $%
v^{L}_{min}(N\setminus S)=\sum\limits_{i\in N\setminus S}R_{i}\left(
a^{-}_{L(N\setminus S)} \right)$, where $a_{F(S)}^-$ is the worst
maximizer for $N\setminus S$ among the maximizers when $S$ chooses first and 
$a^{-}_{L(N\setminus S)}$ is a maximizer for $N\setminus S$ after $S$ has
chosen $a_{F(S)}^-$. We thus have that 
\begin{eqnarray*}
v^{F}(S)+v^{L}(N\setminus S) & =& \sum\limits_{i\in S}R_{i}\left( a^-_{F(S)}
\right)+\sum\limits_{i\in N\setminus S}R_{i}\left( a^-_{L(N\setminus S)}
\right) \\
&=&\sum\limits_{i\in N}R_{i}\left(\{ a^-_{F(S)},a^-_{L(N\setminus
S)}\}\right) \\
&\leq &\sum\limits_{i\in N}R_{i}({a}^*_{N}) \\
&=&v(N),
\end{eqnarray*}
where the inequality follows from the fact that, by feasibility
complementarity, $\{ a^-_{F(S)},a^-_{L(N\setminus S)}\} \in f_{N}.$

Next, we show (\textbf{ii}). An allocation $x\in \mathcal{A}\left(
v^{L}_{max}\right) $ if $v(N)-v^{L}_{max}(N\setminus S)\leq x(S)\leq
v^{L}_{max}(S)$ for all $S.$ An allocation $x\in \mathcal{C}%
\left(v^{F}\right) $ if $v^{F}(S)\leq x(S)\leq v(N)-v^{F}(N\setminus S).$ It
is easy to see that $\mathcal{A}\left( v^{L}_{max}\right) \subseteq \mathcal{%
C}\left( v^{F}\right) $ if and only if $v(N)\geq
v^{L}_{max}(S)+v^{F}(N\setminus S)$ for all $S$, which we have already shown in the proof of part (\textbf{i}). 
\end{proof}



When the first/last games are actually the best/worst cases, we can strengthen these inclusion results. 

\begin{corollary}[of Theorem \protect\ref{main}]
\label{cor:main} For all $P\in \mathcal{P}^{-}$ we have that $\mathcal{A}%
\left( v^{o}(\cdot,P)\right) \subseteq \mathcal{C}\left(v^{p}(\cdot,P)\right)
$. 

For all $P\in \mathcal{P}^{+}$ we have that $\mathcal{A}\left(
v^{L}_{max}(\cdot,P)\right) \subseteq \mathcal{C}\left(v^{p}(\cdot,P)\right)$.
\end{corollary}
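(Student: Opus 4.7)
The plan is to reduce each part of the corollary directly to Theorem \ref{main} by using the coincidences established in Theorem \ref{v^T}. Since $v^o$ and $v^p$ are respectively the maximum and minimum of $v^{T\subseteq S}_{max}$ and $v^{T\subseteq S}_{min}$ over $T\subseteq S$, and $v^F$ corresponds to $T=S$ while $v^L_{min}, v^L_{max}$ correspond to $T=\emptyset$, the optimistic/pessimistic games simply collapse onto the first/last games in the regimes covered by Theorem \ref{v^T}. Once that collapse is recorded, the anti-core and core sets coincide on the nose, and the inclusions of Theorem \ref{main} carry over verbatim.

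For $P\in\mathcal{P}^-$, I would first invoke Theorem \ref{v^T}(i) to get $v^o(S,P)=v^F(S,P)$ for all $S\subset N$, and observe that the same equality holds at $S=N$ trivially, since with an empty complement every $v^{T\subseteq N}_{max}$ reduces to $v^F(N,P)=v(N)$. This yields $\mathcal{A}(v^o(\cdot,P))=\mathcal{A}(v^F(\cdot,P))$. Similarly, Theorem \ref{v^T}(ii) gives $v^p(S,P)=v^L_{min}(S,P)$, and hence $\mathcal{C}(v^p(\cdot,P))=\mathcal{C}(v^L_{min}(\cdot,P))$. Substituting into Theorem \ref{main}(i) delivers $\mathcal{A}(v^o(\cdot,P))\subseteq \mathcal{C}(v^p(\cdot,P))$.

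For $P\in\mathcal{P}^+$, I would apply Theorem \ref{v^T}(iii) to obtain $v^p(S,P)=v^F(S,P)$, so that $\mathcal{C}(v^p(\cdot,P))=\mathcal{C}(v^F(\cdot,P))$. Plugging this into Theorem \ref{main}(ii) yields $\mathcal{A}(v^L_{max}(\cdot,P))\subseteq \mathcal{C}(v^F(\cdot,P))=\mathcal{C}(v^p(\cdot,P))$, which is exactly the claim.

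The main, and really only, point to be careful about is the grand coalition $S=N$, because Theorem \ref{v^T} is stated for proper subsets only; this is settled by the observation that with $N\setminus S=\emptyset$ all four games reduce to the same grand-coalition optimization $v(N)$, so the efficiency condition is identical across every core/anti-core set involved. It is also worth noting why the positive externalities part must be phrased with $v^L_{max}$ rather than $v^o$: as the discussion immediately after Theorem \ref{v^T} records, one can have $v^o(S,P)>v^L_{max}(S,P)$ in $\mathcal{P}^+$, so no analogue of Theorem \ref{v^T}(i) is available and the stronger inclusion $\mathcal{A}(v^o(\cdot,P))\subseteq \mathcal{C}(v^p(\cdot,P))$ cannot be obtained by this route.
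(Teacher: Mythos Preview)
Your proposal is correct and mirrors the paper's own proof: you invoke Theorem \ref{v^T} to identify $v^o=v^F$, $v^p=v^L_{min}$ for $P\in\mathcal{P}^-$ and $v^p=v^F$ for $P\in\mathcal{P}^+$, then apply the corresponding parts of Theorem \ref{main}. Your extra care about the $S=N$ case is a harmless addition that the paper leaves implicit.
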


\begin{proof}
If $P\in \mathcal{P}^{-},$ then $v^{o}=v^{F}$ and $v^{p}=v^{L}_{min}.$ The
result follows from the first part of Theorem \ref{main}. If $P\in \mathcal{P%
}^{+},$ then $v^{p}=v^{F}.$ It follows from the second part of Theorem \ref%
{main}.
\end{proof}


Next, we show that when we have negative externalities the anti-core of the optimistic game is always non-empty.

\begin{theorem}
\label{main_non_empty} For all $P\in \mathcal{P}^-$, $\mathcal{A}\left(
v^{o}(\cdot,P)\right)$ is non-empty.
\end{theorem}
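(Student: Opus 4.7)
The plan is to exhibit an explicit allocation in $\mathcal{A}(v^{o}(\cdot,P))$, built directly from a grand-coalition optimum. Fix a maximizer $a^{*}_{N}\in f_{N}$ of the grand coalition's problem, so $\sum_{i\in N}R_{i}(a^{*}_{N})=v(N)$, and define $x\in\mathbb{R}^{N}$ by $x_{i}=R_{i}(a^{*}_{N})$. By Theorem~\ref{v^T}(i), $v^{o}(S,P)=v^{F}(S,P)$ for every $S\subsetneq N$, and $v^{o}(N,P)=v^{F}(N,P)=v(N)$. Efficiency is immediate: $x(N)=\sum_{i\in N}R_{i}(a^{*}_{N})=v(N)=v^{o}(N,P)$. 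So it only remains to verify that $x(S)\le v^{F}(S,P)$ for each proper subset $S$.

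The heart of the proof is to show that the projection of $a^{*}_{N}$ onto $S$ is unilaterally feasible for $S$ alone, i.e.\ $(a^{*}_{N})_{S}\in f_{S}(\ominus_{N\setminus S})$. Once this is shown,
\[
x(S)=\sum_{i\in S}R_{i}\bigl((a^{*}_{N})_{S}\bigr)\le \max_{a_{S}\in f_{S}(\ominus_{N\setminus S})}\sum_{i\in S}R_{i}(a_{S})=v^{F}(S,P)=v^{o}(S,P),
\]
completing the argument. I would establish this feasibility claim by chaining feasibility complementarity and negative externalities as follows. First, since inaction is always feasible, $\ominus_{S}\in f_{S}\bigl((a^{*}_{N})_{N\setminus S}\bigr)$; feasibility complementarity then gives $\bigl(\ominus_{S},(a^{*}_{N})_{N\setminus S}\bigr)\in f_{N}$, and applying feasibility complementarity a second time yields $(a^{*}_{N})_{N\setminus S}\in f_{N\setminus S}(\ominus_{S})$. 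Next, negative externalities applied to this membership deliver $f_{S}\bigl((a^{*}_{N})_{N\setminus S}\bigr)\subseteq f_{S}(\ominus_{N\setminus S})$. Finally, $a^{*}_{N}\in f_{N}$ and feasibility complementarity give $(a^{*}_{N})_{S}\in f_{S}\bigl((a^{*}_{N})_{N\setminus S}\bigr)$, so combining with the previous inclusion gives the desired $(a^{*}_{N})_{S}\in f_{S}(\ominus_{N\setminus S})$.

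The main obstacle is precisely that last chain of feasibility manipulations: there is no reason a priori that the $S$-coordinates of the grand-coalition optimum are feasible for $S$ when $N\setminus S$ is inactive, since changing what $N\setminus S$ does could in principle change $S$'s feasible set in either direction. The trick is to use the always-available null action to ``decouple'' $N\setminus S$'s actions from $S$'s feasibility via two back-and-forth applications of feasibility complementarity, and only then invoke the negative-externality shrinking inclusion. Once this technical lemma is in hand, the overall argument is essentially a single-line efficiency computation plus a pointwise comparison, and produces not just non-emptiness but a canonical allocation in the anti-core of the optimistic game, namely the profile of individual revenues realized at any efficient outcome.
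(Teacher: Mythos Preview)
Your proof is correct and rests on the same core inequality as the paper's, namely $v^{F}(S)\geq\sum_{i\in S}R_{i}(a^{*}_{N})$, obtained from $(a^{*}_{N})_{S}\in f_{S}(\ominus_{N\setminus S})$. The paper invokes this inequality to verify anti-balancedness of $v^{o}$, while you instead directly exhibit the allocation $x_{i}=R_{i}(a^{*}_{N})$ as a member of the anti-core; the two routes are equivalent, and yours is slightly more constructive (the paper itself later identifies precisely this allocation with the Bird rule in the mcst application). Your chain establishing $(a^{*}_{N})_{S}\in f_{S}(\ominus_{N\setminus S})$ via two uses of feasibility complementarity and one of negative externalities spells out in detail what the paper asserts in a single clause.
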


\begin{proof}
Fix $P\in \mathcal{P}^-$ and write $v^F(S)$ and $v^o(S)$ instead of $v^F(S,P) $ and $v^o(S,P)$. Since $P\in \mathcal{P}^-$ we have that $v^o=v^F$.

Let $a^*$ be (one of) the maximizer(s) for the problem of the grand
coalition and $a_{F(S)}$ be one of the maximizers when $S$ selects first. We first show that $v^o(S)\geq \sum\limits_{i\in S}R_{i}\left( a^{\ast }\right)$.

We have that 
\begin{eqnarray*}
v^{o}(S) &=&\sum_{i\in S}R_{i}\left( a_{F(S)} \right) \\
&\geq &\sum_{i\in S}R_{i}\left( a_{S}^{\ast } \right) \\
&=&\sum_{i\in S}R_{i}\left( a^{\ast }\right),
\end{eqnarray*}
where the first inequality is by definition of $a_{F(S)}$, since $a^*_S\in f_S(\ominus_{N\setminus S})$.

Then, take $\lambda\in \Lambda$ and multiply the inequality by $\lambda _{S}$
and sum over $S$ to obtain 
\begin{eqnarray*}
\sum_{S\subset N}\lambda _{S}v^{o}(S) &=&\sum_{S\subset N}\lambda
_{S}\sum_{i\in S}R_{i}\left( a_{F(S)} \right) \\
&\geq &\sum_{S\subset N}\lambda _{S}\sum_{i\in S}R_{i}\left( a^{\ast }\right)
\\
&=&\sum_{i\in N}\sum_{S\ni i}\lambda _{S}R_{i}\left( a^{\ast }\right) \\
&=&\sum_{i\in N}R_{i}\left( a^{\ast }\right) \\
&=&v^{o}(N).
\end{eqnarray*}

Thus, $v^F$ is anti-balanced and its anti-core is non-empty.
\end{proof}

Combining our two main results, we obtain the following corollary.

\begin{corollary}
For all $P\in \mathcal{P}^-$, $\emptyset\neq\mathcal{A}\left(
v^{o}(\cdot,P)\right)\subseteq\mathcal{C}\left( {v}^{p}(\cdot,P)\right)$.
\end{corollary}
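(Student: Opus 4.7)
The plan is straightforward: the statement is an immediate combination of the two results just established. I would simply chain together Theorem \ref{main_non_empty} (which gives non-emptiness of $\mathcal{A}(v^o(\cdot,P))$ on $\mathcal{P}^-$) and Corollary \ref{cor:main} (which gives the inclusion $\mathcal{A}(v^o(\cdot,P)) \subseteq \mathcal{C}(v^p(\cdot,P))$ on $\mathcal{P}^-$). So the proof would read essentially: ``Fix $P\in\mathcal{P}^-$. By Theorem \ref{main_non_empty}, $\mathcal{A}(v^o(\cdot,P))\neq\emptyset$. By Corollary \ref{cor:main}, $\mathcal{A}(v^o(\cdot,P))\subseteq\mathcal{C}(v^p(\cdot,P))$. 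Concatenating gives the claimed chain $\emptyset\neq\mathcal{A}(v^o(\cdot,P))\subseteq\mathcal{C}(v^p(\cdot,P))$.''

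No new content is required beyond invoking the two prior results in sequence, and there is no obstacle to overcome; the corollary is really just packaging them into one statement to emphasize the punchline. If I wanted to be a bit more self-contained for the reader, I could briefly remind them that on $\mathcal{P}^-$ Theorem \ref{v^T} identifies $v^o$ with $v^F$ and $v^p$ with $v^L_{min}$, so the inclusion is in fact the first part of Theorem \ref{main} and the non-emptiness follows by anti-balancedness of $v^F$ established in the proof of Theorem \ref{main_non_empty}. But no separate argument is needed: the corollary is a pure restatement.

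One small stylistic choice would be whether to also remark that the inclusion has bite only when $\mathcal{A}(v^o(\cdot,P))$ is non-empty, and that is exactly what Theorem \ref{main_non_empty} supplies, so this corollary is the ``useful'' form of the inclusion result in the negative-externalities setting. That observation is worth one line but requires no proof. Altogether the proof should fit in two or three sentences.
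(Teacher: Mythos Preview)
Your proposal is correct and matches the paper's approach exactly: the paper does not even write out a proof, stating only ``Combining our two main results, we obtain the following corollary,'' which is precisely the chaining of Theorem \ref{main_non_empty} and Corollary \ref{cor:main} (equivalently Theorem \ref{main} via Theorem \ref{v^T}) that you describe.
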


Thus, with very little structure on the problem other than negative
externalities, we are able to show the non-emptiness of the pessimistic core.

The guarantee of a non-empty anti-core does not carry over to problems with positive externalities, as illustrated in the following counterexamples.

\begin{example}
   Consider Example \ref{ex_irts}, and $v^L_{max}$. We need $y_1\leq 7$, $y_2\leq 0$ and $y_3\leq 0$, which are incompatible with $y_1+y_2+y_3=15$, and thus $\mathcal{A}(v^L_{max})=\emptyset$. The same is true for $\mathcal{A}(v^o)$.
\end{example}

\begin{example}
\cite{Atatrudeau2024} provide a variant of the queueing problem by supposing that agents must buy machines to queue on, and can buy as many machines as they want. The problem becomes one with positive feasibility externalities: by itself, a coalition can only buy its own machines and queue on them; if it joins others, it can still do so, but can also take advantage of unused time slots on their machines. Hence, in this case, choosing last corresponds to the optimistic approach, $v^{o}=v^{L}_{max}$. \cite{Atatrudeau2024} show that the core of the corresponding pessimistic game is sometimes empty, sometimes not. By Theorem \ref{main}, so is the anti-core of the optimistic game.
\end{example}



\section{Applications}
\label{sec:app}
In this section, we discuss several applications that exhibit feasibility externalities. We examine how these applications can be modeled within our framework, how optimistic and pessimistic approaches have been defined in each case, and whether our results allow to reinterpret existing results or obtain new ones. 

\subsection{Queueing problems}

We first examine more formally our example of queueing problems. Consider a set of agents $N$ that each have a job to be processed at one machine. The machine can process only one job at a time. Each agent $i\in N$ incurs waiting costs $w_{i}> 0$ per unit of time. The queueing problem determines both the order in which to serve agents and the corresponding monetary transfers they should receive (see \cite{chun16} for a survey on the queueing problem). 

In our framework, $R_{i}=-w_{i}r_{i}(\sigma)$ where $r_{i}(\sigma)$ is the rank of agent $i$ in the queue $\sigma$. Since a machine cannot serve more than a job at a given time, a queue $\sigma$ is feasible for any coalition if there are no $i,j\in N$ such that $r_{i}(\sigma)=r_{j}(\sigma)$. Then, $f_{S}(a_{F(N\setminus S)})$ is the feasible set for $S$ when the coalition $N\setminus S$ takes the first $|N\setminus S|$ positions in the queue. Hence, this problem exhibits negative feasibility externalities. Then, our results implies that choosing last corresponds to the pessimistic approach, $v^{L}_{min}=v^{p}$ and choosing first corresponds to the optimistic approach, $v^{F}=v^{o}$

These pessimistic and optimistic approaches have been defined independently in the literature. \cite{m03} built the optimistic game, using the assumption that a coalition is served before the players outside the coalition. The minimal transfer rule\footnote{The minimal transfer rule assigns to each agent a position in the queue and a monetary transfer. The monetary transfer is equal to half of their unit waiting cost multiplied by the number of agents in front of them in the queue subtracted by half of the sum of the unit waiting costs of the people behind them in the queue.}, $\phi^{min}$, is obtained by applying the Shapley value to $v^{o}$. Alternatively, \cite{chun06} assumes that a coalition is served after the non-coalitional members, obtaining the pessimistic game.\footnote{Independently, \cite{ks06} considered the same scenario as in \cite{chun06}. They introduced the associated game, the so-called tail game, and studied its core.} The maximal transfer rule\footnote{The maximal transfer rule assigns to each agent a position in the queue and a monetary transfer. The monetary transfer is equal to a half of the sum of the unit waiting costs of her predecessors minus a half of her unit waiting cost multiplied by the number of her followers.}, $\phi^{max}$, is obtained by applying the Shapley value to $v^{p}$.


It has been shown that for these queueing problems $v^o$ is concave and $v^p$ is convex, resulting in their Shapley values being respectively in the anti-core of $v^o$ and the core of $v^p$. We thus obtain the following results.

\begin{theorem}
For any queueing problem, we have $\phi^{min}\in \mathcal{A}(v^{o})\subseteq\mathcal{C}(v^p)$ and $\phi^{max}\in \mathcal{C}(v^p)$.
\end{theorem}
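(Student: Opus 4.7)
The plan is to assemble three ingredients that are essentially already in hand: (i) queueing problems lie in $\mathcal{P}^-$; (ii) for queueing problems, $v^o$ is concave and $v^p$ is convex, as cited from the literature in the paragraph immediately above the theorem; and (iii) Corollary \ref{cor:main}, which guarantees $\mathcal{A}(v^o)\subseteq \mathcal{C}(v^p)$ whenever $P\in\mathcal{P}^-$. With these in place the theorem reduces to a short chain.

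First I would explicitly note that queueing problems are in $\mathcal{P}^-$: this was observed just above, since any slots taken by $N\setminus S$ are removed from what $S$ can still use, so $f_{S}(a_{N\setminus S})\subseteq f_{S}(\ominus_{N\setminus S})$. This places us in the negative-externality regime and, via Theorem \ref{v^T}, identifies $v^{F}$ with $v^{o}$ and $v^{L}_{min}$ with $v^{p}$, which is how these games are used in the queueing literature.

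Next I would invoke the two classical Shapley-value facts. Because $v^o$ is concave, $Sh(v^o)\in\mathcal{A}(v^o)$; since $\phi^{min}$ is defined as $Sh(v^o)$, this gives $\phi^{min}\in\mathcal{A}(v^o)$. Analogously, since $v^p$ is convex, $Sh(v^p)\in\mathcal{C}(v^p)$, and so $\phi^{max}\in\mathcal{C}(v^p)$. Finally, the middle inclusion $\mathcal{A}(v^o)\subseteq\mathcal{C}(v^p)$ is precisely the content of Corollary \ref{cor:main} applied to a queueing problem (which belongs to $\mathcal{P}^-$). Chaining this with $\phi^{min}\in\mathcal{A}(v^o)$ completes the full statement, and as a by-product one also obtains $\phi^{min}\in\mathcal{C}(v^p)$.

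The main obstacle is not so much the proof itself as making clear which ingredients are being borrowed from outside: the concavity of $v^o$ (due to \cite{m03}) and the convexity of $v^p$ (due to \cite{chun06} and \cite{ks06}), together with the standard result that the Shapley value of a convex (respectively concave) game lies in its core (respectively anti-core). Once those citations are in place, the contribution of the theorem is really to point out that the inclusion $\mathcal{A}(v^o)\subseteq\mathcal{C}(v^p)$ is automatic from Corollary \ref{cor:main}, giving a unified view of the two transfer rules rather than requiring the convexity/concavity arguments to be stated separately.
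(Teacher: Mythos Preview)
Your proposal is correct and follows essentially the same approach as the paper: the paper likewise relies on the cited concavity of $v^o$ and convexity of $v^p$ to place $\phi^{min}=Sh(v^o)$ in $\mathcal{A}(v^o)$ and $\phi^{max}=Sh(v^p)$ in $\mathcal{C}(v^p)$, and then obtains the inclusion $\mathcal{A}(v^o)\subseteq\mathcal{C}(v^p)$ from the general results (Theorem~\ref{main}/Corollary~\ref{cor:main}) after noting that queueing problems lie in $\mathcal{P}^-$.
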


The results that $\{\phi^{min},\phi^{max}\}\in \mathcal{C}(v^p)$ come respectively from \cite{m03} and \cite{chun06}.
We obtain an additional justification for the minimal transfer rule, as it offers allocations that are below the optimistic bounds and above the pessimistic bounds. The maximal transfer rule offers allocations above the pessimistic bounds, but not always below the optimistic bounds.



\subsection{Minimum cost spanning tree problems}
\label{subsec:mcst}
We have a set of nodes consisting of $N_0\equiv N\cup \{0\}$, where $0$ is a special node called the source. Agents need to be connected to the source to obtain a good or a service. To each edge $(i,j)\in N_0\times N_0$ corresponds a cost $c_{ij}\geq 0$, with the assumption that $c_{ij}=c_{ji}$. These costs are fixed costs, paid once if an edge is used, regardless of how many agents use it. The problem is to connect all agents to the source at the cheapest cost. Given the assumptions above, among the optimal networks there always exists a spanning tree, hence the name of the problem. A minimum cost spanning tree (mcst) problem is $(N,c)$, where $c$ is the list of all edge costs. $c$ is also called a cost matrix.

In our framework, $R_i=-c_{p(i)i}$, where $p(i)\in N_0$ is the predecessor of $i$ in the unique path from 0 to $i$ in the spanning tree. The usual assumption is to suppose that a coalition $S$ cannot use edge $(i,j)$ if $i,j\in N_0\setminus S$. Then $f_S(\ominus)$ is the set of spanning trees rooted at $0$, while in $f_S(a_{N\setminus S})$, for any $a_{N\setminus S}$, we also treat agents $i\in N\setminus S$ such that $(a_{N\setminus S})_i\neq \ominus$ as additional sources. Thus, we obtain a problem with positive feasibility externalities. According to our results, we have $v^p=v^F$.

Most of the literature has considered that pessimistic game $v^p$, in which a coalition $S$ connects to the source first, before $N\setminus S$. An exception is \cite{bvp2007ijgt}, which considers the game in which coalition $S$ supposes that $N\setminus S$ has already connected to the source. In such a case, agents in $N\setminus S$, being connected, act as a source for $S$. Thus, how they are connected is irrelevant, and thus this game is equivalent, in our notation to both $v^L_{min}$ and $v^L_{max}$. 

While in general we cannot guarantee that for problems with positive externalities the optimistic game is the one in which a coalition chooses last, the discussion above shows that how the complement set $N\setminus S$ is connected is irrelevant, so there is no possibility for $S$ to obtain a better outcome by having some of its members select first. Thus, we also have that $v^{o}=v^{L}_{min}=v^{L}_{max}$. 

The literature has devoted considerable attention to the notion of irreducible cost matrix \citep{feltkamp1994irreducible, bvp2007jet}: since many edges are not used in any optimal spanning tree, we reduce the cost of these edges as much as possible, under the constraint that $v^p(N)$ does not change. There is a unique way to do so, and irreducible edge costs can be obtained as follows: take any optimal spanning tree, and for each pair of nodes $(i,j)\in N_0$, look at the (unique) path from one to the other, and assign to $(i,j)$ the most expensive edge on that path. We then obtain the irreducible cost matrix $\bar{c}$. Let $\bar{v}^p$ and $\bar{v}^o$ be the pessimistic and optimistic games obtained from the irreducible cost matrix, corresponding once again to choosing first/last.

\begin{theorem}
\citep{bvp2007ijgt}. For any mcst problem $(N,c)$ we have
\begin{enumerate}[\bfseries(i)]
    \item $\bar{v}^p$ and $\bar{v}^o$ are dual.
    \item $\bar{v}^o=v^o$.
\end{enumerate}
\end{theorem}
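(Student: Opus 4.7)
Work with the cost games $\bar C^p(S)=-\bar v^p(S)$ and $\bar C^o(S)=-\bar v^o(S)$; duality in (i) then amounts to the additive identity
\[
\bar C^p(N)=\bar C^p(N\setminus S)+\bar C^o(S) \qquad \text{for every } S\subseteq N.
\]
The key structural fact I will invoke is the max-on-path characterisation of $\bar c$: fixing any optimal mcst $T^*$ of $(N_0,c)$, the entry $\bar c_{ij}$ equals the maximum $c$-cost of an edge on the unique $i$-to-$j$ path in $T^*$, so $T^*$ remains optimal under $\bar c$ and every edge of $T^*$ satisfies $\bar c_{ij}=c_{ij}$.

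For part (i), the inequality $\bar C^p(N)\le\bar C^p(N\setminus S)+\bar C^o(S)$ follows by splicing an optimal spanning tree for $(N\setminus S)\cup\{0\}$ under $\bar c$ with an optimal $S$-connection under $\bar c$ (treating $(N\setminus S)\cup\{0\}$ as a compound source) to form a feasible spanning tree of $N_0$. For the reverse, I would partition the edges of $T^*$ into $E_{\bar S}$ (both endpoints in $(N\setminus S)\cup\{0\}$) and $E_S$ (at least one endpoint in $S$). The set $E_{\bar S}$ is a forest on $(N\setminus S)\cup\{0\}$; using the max-on-path property, I argue that bridging its components into a full spanning tree of $(N\setminus S)\cup\{0\}$ adds $\bar c$-cost exactly matched by the $c$-cost of certain edges in $E_S$ that get reassigned to the $S$-side, while the leftover portion of $E_S$ still connects every vertex of $S$ to $(N\setminus S)\cup\{0\}$. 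This decomposes $c(T^*)=\bar C^p(N)$ as the sum of two feasible pieces whose costs are at least $\bar C^p(N\setminus S)$ and $\bar C^o(S)$ respectively.

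For part (ii), the relation $\bar c\le c$ entrywise gives $\bar v^o(S)\ge v^o(S)$ at once. For the reverse, I plan to show that any optimal $S$-connection under $\bar c$ admits a realisation using only edges of $T^*$: for each non-$T^*$ edge $(i,j)$ appearing in the connection, an edge $(u,v)\in T^*$ on the $i$-to-$j$ path with $c_{uv}=\bar c_{ij}$ exists, and a matroid-style exchange swaps $(u,v)$ in for $(i,j)$ without increasing $\bar c$-cost or destroying feasibility. Once the connection consists only of $T^*$-edges, its $\bar c$-cost coincides with its $c$-cost, so $C^o(S)\le\bar C^o(S)$.

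The principal obstacle is the bookkeeping step in the reverse direction of (i): exhibiting an explicit matching between each new bridge edge inside $(N\setminus S)\cup\{0\}$ and a discarded edge of $E_S$ of the same $\bar c$-value, together with the verification that the leftover $E_S$-portion remains a feasible $S$-connection. This exact matching is the heart of the construction in \cite{bvp2007ijgt} and is where the full strength of the irreducible cost structure has to be deployed.
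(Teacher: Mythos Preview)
The paper does not supply its own proof of this theorem: it is stated as a result of \cite{bvp2007ijgt} and immediately used as a black box to derive the subsequent corollary identifying $\mathcal{A}(v^o)$ with the irreducible core. There is therefore no in-paper argument to compare your proposal against.

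Your outline is a reasonable roadmap and is broadly consistent with how the result is established in the cited source, but as written it is explicitly incomplete. You flag the reverse inequality in~(i) --- matching each new bridge edge inside $(N\setminus S)\cup\{0\}$ to a discarded edge of $E_S$ of equal $\bar c$-value while checking that the leftover remains a feasible $S$-connection --- as the ``principal obstacle'' and defer to \cite{bvp2007ijgt} for it. That step is exactly where the substantive work lies, so as a standalone proof the proposal leaves the main gap unfilled. For part~(ii), the exchange idea is sound in spirit, but the swap of a non-$T^*$ edge $(i,j)$ for the maximal edge $(u,v)$ on the $T^*$-path needs care: $(u,v)$ may have both endpoints in the source set $(N\setminus S)\cup\{0\}$, in which case it contributes nothing to connecting $S$ and a single swap does not preserve feasibility of the $S$-connection. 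One needs either a multi-step exchange along the $T^*$-path or a reformulation via contracting the source set before invoking the matroid argument.
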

This leads us, using our results, to the following corollary.

\begin{corollary}
For any mcst problem $(N,c)$ we have $\mathcal{A}(v^o)=\mathcal{C}(\bar{v}^p)$.
\end{corollary}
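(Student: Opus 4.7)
The plan is to combine the two parts of the Bergantiños--Vidal-Puga theorem cited immediately above with Proposition \ref{prop:dual}. The argument is essentially a one-line chain of equalities, so it is less a proof to construct than a proof to assemble.

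First, part (ii) of the theorem gives $\bar{v}^o = v^o$, which at the level of sets of allocations immediately yields $\mathcal{A}(v^o) = \mathcal{A}(\bar{v}^o)$. Second, part (i) states that $\bar{v}^p$ and $\bar{v}^o$ are dual games; applying Proposition \ref{prop:dual} to this dual pair gives $\mathcal{A}(\bar{v}^o) = \mathcal{C}(\bar{v}^p)$. Chaining the two equalities produces the desired conclusion $\mathcal{A}(v^o) = \mathcal{C}(\bar{v}^p)$.

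There is no serious technical obstacle here. The only point worth a sentence of commentary is that the ``optimistic game'' appearing in the cited theorem really does coincide with $v^o$ as constructed in Section \ref{subsec:games}; this has already been justified in the paragraph preceding the corollary, where it was observed that in the mcst setting the way $N\setminus S$ is connected to the source is irrelevant for $S$, so no intermediate choice of $T \subsetneq S$ can outperform $T=\emptyset$, and therefore $v^o = v^L_{\min} = v^L_{\max}$. Once this identification is in place, no further computation is needed.
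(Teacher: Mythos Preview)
Your proposal is correct and follows exactly the route the paper intends: the paper offers no detailed argument beyond the remark ``This leads us, using our results, to the following corollary,'' and the results in question are precisely Proposition~\ref{prop:dual} together with parts (i) and (ii) of the Bergantiños--Vidal-Puga theorem, chained just as you describe.
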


This result is interesting for two reasons. First, $\mathcal{C}(\bar{v}^p)$ is called the irreducible core \citep{bird1976}, and has been shown to be uniquely characterized by additivity and monotonicity properties \citep{tijs2006, bvp2015}. Second, our equivalence with the anti-core of the optimistic game means that we do not need to go through the modification of the cost matrix into the irreducible matrix to obtain the irreducible core.

It is also worth noting that the allocation used to prove the non-emptiness of the anti-core of the optimistic game in Theorem \ref{main_non_empty} corresponds to the Bird allocation \citep{bird1976} in which each agent pays the cost of the edge connecting it to its nearest neighbor in its unique path to the source.

Finally, if we extend the model and suppose that we have a set of sources $\mathcal{S}$ with each agent $i$ needing to connect to a single source in $\mathcal{S}_i\subseteq\mathcal{S}$, then we might not have anymore that $v^o=v^L_{max}$, as it now matters to which sources the complement set is connected. It is easy to build an example where a coalition $S$ is better off having some of its members choose first to change the behavior of $N\setminus S$ so that they connect to sources that will also be used by the other members of $S$.

\subsection{River sharing problems}

Suppose a river described as a line with agents $i$ being upstream of agent $j$ if and only $i<j.$ There is an entry $e_{i}\geq 0$ of water at each location $i,$ and the water that flows at location $i$ can be consumed by agent $i$ or allowed to flow downstream. The benefit from water consumption for agent $i$ is given by a strictly increasing and strictly concave function $b_{i}.$ A water sharing problem is $(N,e,b),$ the set of players, the vector of water entries, and the collection of benefit functions \citep{AMBEC2002453}. The problem for the grand coalition is to maximize joint benefits, under the constraint imposed by the flows of water. If $x_{i}\geq 0$ is the consumption level of agent $i,$ the feasible set is constrained as follows: for any $i\in N$, that $\sum_{j\leq i}x_{j}\leq \sum_{j\leq i}e_{j}.$ For a coalition $S$, if the complement set is consuming any amount of water, the feasible set is reduced, and we thus have negative feasibility externalities. By our results, $v^o=v^F$ and $v^p=v^L_{min}$.


If coalition $S$ chooses first, it has access to all water entries in the river, subject to the physical constraints imposed by the river, i.e. an agent upstream of a location cannot consume the water entry at that location. Thus, we obtain that $%
v^{F}(S)=\max_{\left( x_{i}\right) _{i\in S}}\sum_{i\in S}b_{i}(x_{i})$
under the constraints that $\sum_{\substack{ j\leq i  \\ j\in S}}x_{j}\leq\sum_{j\leq i}e_{j}$ for all $i\in S.$  

If $S$ chooses last, then $N\setminus S$, given that its members are not satiable, have consumed as much water as they could. The exact maximizer is thus irrelevant, and we have $v^L\equiv v^L_{min}=v^L_{max}$. To define $v^L$ properly we need the following definition: a coalition is consecutive
if for any pair of agents in that coalition, adjoining agents are also in the coalition.
 Thus, we have
that $v^{L}(S)=0$ if $n\notin S$ and $v^{L}(S)=\max_{\left( x_{i}\right) _{i\in S^n}}\sum_{i\in S^n}b_{i}(x_{i})$ under the constraints that $\sum_{\substack{ j\leq i  \\ j\in S^n}}x_{j}\leq\sum_{\substack{j\leq i \\j\in S^n}}e_{j}$ for all $i\in S^n$
otherwise, where $S^{n}$ is the largest consecutive coalition in $S$ that contains $n.$ In words, if $i\in S$ is such that a member of $N\setminus S$ is downstream, then the water entries at $i$ and upstream have all been consumed by $N\setminus S$. Thus, the only group in $S$ that is able to consume is $S^n$, such that all its members are downstream of all members of $N\setminus S$. 

The coalitional functions proposed in the literature have been constructed from various doctrines used in international law. Under the unlimited territorial integrity (UTI) doctrine, an agent can consume any water that flows through its location. $v^{UTI}(S)$ is seen as an upper bound on the welfare of $S$, and it is easy to see that it corresponds to $v^F$. 

Under the absolute territorial sovereignty (ATS) doctrine, an agent has absolute rights over the water entering on its territory. For a single agent $i$, this implies that he should received at least $b_i(e_i)$. For larger coalitions, we suppose that an agent $i$ can transfer water to $j$ only if $j$ is its immediate downstream neighbour. Otherwise, the water is consumed by the agent(s) between $i$ and $j$.  Thus, let $\Gamma (S)$ be the coarsest partition of $S$ into consecutive coalitions. Then, $v^{ATS}(S)=\sum_{T\in \Gamma(S)}\max_{\left( x_{i}\right) _{i\in T}}\sum_{i\in T}b_{i}(x_{i})$ under the constraints that $\sum_{\substack{ j\leq i  \\ j\in T}}x_{j}\leq\sum_{\substack{ j\leq i  \\ j\in T}}e_{j}$ for all $i\in T$ and all $T\in\Gamma (S).$ 

Given the pessimistic constraints in the ATS version of the problem, $v^{ATS}(S)$ is seen as a lower bound on the welfare of $S$. But it is immediate that $v^{ATS}\geq v^L$, with, in particular, that $v^{ATS}(S)= v^L(S)$ if $S$ is a consecutive coalition containing $n$ and $v^{ATS}(S)\geq 0= v^L(S)$ if $S$ does not contain $n$. Thus, while pessimistic, $v^{ATS}$ is much less pessimistic than $v^L$.
We thus obtain that:
\[
v^p=v^L\leq v^{ATS}\leq v^{UTI}=v^F=v^o.
\]

We can see that in this model $v^p$ is a particularly pessimistic function, and only coalitions containing $n$ can guarantee a positive lower bound. It is no surprise that $v^{ATS}$ offers more interesting but higher lower bounds, showing that we must not always be interested in the most pessimistic lower bounds.   

\cite{AMBEC2002453} define the downstream incremental allocation rule as follows: $y_{i}^{DI}=v^{UTI}\left( \left\{ 1,\ldots,i\right\}\right)
-v^{UTI}\left( \left\{ 1,\ldots,i-1\right\} \right) =v^{ATS}\left(\left\{
1,\ldots,i\right\} \right) -v^{ATS}\left( \left\{ 1,\ldots,i-1\right\}\right)$. We have the following results.

\begin{theorem}
For all river sharing problem $(N,e,b),$ we have:

\begin{enumerate} 

\item[\bf{(i)}] \citep{AMBEC2002453}: $y^{DI}=\mathcal{A(}v^{UTI})\cap \mathcal{C(}%
v^{ATS})=\mathcal{A(}v^{o})\cap \mathcal{C(}%
v^{ATS});$ 

\item[\bf{(i)}] $y^{DI}\in \mathcal{C(}v^{p}).$
\end{enumerate}
\end{theorem}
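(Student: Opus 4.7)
The plan is to derive part (ii) directly from part (i) by invoking the general inclusion result already established for problems with negative feasibility externalities, so that essentially no new computation is needed.

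First, I would verify that the river sharing problem lies in $\mathcal{P}^{-}$. This is immediate from the setup: if $N\setminus S$ consumes a positive amount of water at any location, the capacity constraints $\sum_{j\leq i, j\in S} x_j \leq \sum_{j\leq i} e_j - \sum_{j\leq i, j\in N\setminus S} x_j$ available to $S$ are tighter than when $N\setminus S$ is inactive. Hence the feasible set shrinks, and the problem exhibits negative feasibility externalities. With this observation and the earlier discussion in the subsection identifying $v^{UTI}$ as the game in which $S$ chooses first, I would record the identity $v^{UTI} = v^F = v^o$, where the second equality uses Theorem \ref{v^T}(i).

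Next, I would invoke part (i) of the theorem (the result of \cite{AMBEC2002453}), which gives $y^{DI} \in \mathcal{A}(v^{UTI})$. Substituting $v^{UTI} = v^o$, this means $y^{DI} \in \mathcal{A}(v^o)$. Then I apply Corollary \ref{cor:main}: for every $P \in \mathcal{P}^{-}$ we have $\mathcal{A}(v^o(\cdot,P)) \subseteq \mathcal{C}(v^p(\cdot,P))$. Combining the membership and the inclusion yields $y^{DI} \in \mathcal{C}(v^p)$, as required.

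The main (and essentially only) conceptual step is recognizing that the river sharing problem fits the framework's negative-externalities hypothesis so that Corollary \ref{cor:main} applies; once that is in hand, the result is a one-line consequence. There is no real obstacle: unlike a direct coalitional-rationality check against $v^p$ (which would require arguing, for every $S \subseteq N$, that $y^{DI}(S) \geq v^L(S)$, with the awkward case distinction depending on whether $n \in S$ and on the structure of the largest consecutive subcoalition $S^n$), the route through the anti-core of $v^o$ bypasses these case analyses entirely and illustrates precisely the kind of reinterpretation of existing results that the paper advertises.
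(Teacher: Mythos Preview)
Your proposal is correct and follows exactly the route the paper takes: the paper simply states that part (ii) ``is a simple consequence of our main theorem,'' meaning precisely the chain $y^{DI}\in\mathcal{A}(v^{UTI})=\mathcal{A}(v^{o})\subseteq\mathcal{C}(v^{p})$ via Corollary \ref{cor:main}, which you have spelled out in detail.
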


Part (\textbf{ii}) is a simple consequence of our main theorem. \medskip%
\newline
Many extensions of the model have been considered, including to cases where
some agents can be satiated \citep{AMBECEHLERS2008} and cases with multiple
springs and bifurcations \citep{Khmelnitskaya2010}. See \cite{beal-river2012}
for a review.

\subsection{Applications with duality}
\label{sec:dual}

The applications and examples we have examined so far have all been such that studying the problem using lower and upper bounds gave different sets of allocations, and thus the perspective taken mattered. However, in some other cases, the two approaches are dual, and as seen in Proposition \ref{prop:dual}, it is unnecessary to study $v^o$ and $v^p$ separately if they are dual to each other. We discuss this duality and show that if, in addition, optimistic/pessimistic is equivalent to choosing first/last, then this duality can be easily spotted.

In our framework, $v^F$ and $v^L_{min}/v^L_{max}$ are dual if and only if for any coalition $S$, letting $S$ pick first and $N\setminus S$ react to that afterward always leads to an efficient outcome. In other words, an optimal outcome can always be obtained by sequential selfish optimizations by $S$ and $N\setminus S$. 

For all $S\subset N$ and all $a_S\in \mathbb{A}_S$, let $a_{N\setminus S}(a_S)$ be a best response of $N\setminus S$ to $S$ choosing $a_S$.

\begin{proposition}\label{dual}
For a problem $P\in \mathcal{P}$, we have that
\begin{enumerate}
\item[\bf{(i)}]  $v^{F}$ and $v_{min}^L$ are dual if and only if for all $S\subset N$ there exists $a_{S}\in \mu(\ominus_{N\setminus S})$ such that $a_S\in \min_{a'_S\in \mu(\ominus_{N\setminus S})}\max_{i\in N\setminus S}R_i(a_{N\setminus S}(a'_S))$ and  such that $\{ a_{S},a_{N\setminus S}(a_S)\}\in\arg\max\limits_{a_{N}\in f_{N}}\sum\limits_{i\in N}R_{i}(a_{N}).$
\item[\bf{(ii)}]  $v^{F}$ and $v_{max}^L$ are dual if and only if for all $S\subset N$ there exists $a_{S}\in \mu(\ominus_{N\setminus S})$ such that $a_S\in \max_{a'_S\in \mu(\ominus_{N\setminus S})}\max_{i\in N\setminus S}R_i(a_{N\setminus S}(a'_S))$ and  such that $\{ a_{S},a_{N\setminus S}(a_S)\}\in\arg\max\limits_{a_{N}\in f_{N}}\sum\limits_{i\in N}R_{i}(a_{N}).$
\end{enumerate}
\end{proposition}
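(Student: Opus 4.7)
The plan is to reformulate the duality condition as a pointwise equality of coalitional values, and then pin it down using the feasibility--complementarity inequality already exploited in the proof of Theorem~\ref{main}.

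First, we record that $v^F(N)=v^L_{min}(N)=v^L_{max}(N)=v(N):=\max_{a_N\in f_N}\sum_{i\in N}R_i(a_N)$, so by the definition of a dual game, $v^F$ is dual to $v^L_{min}$ (respectively $v^L_{max}$) if and only if
\[
v^F(S)+v^L_{min}(N\setminus S)=v(N)\quad\bigl(\text{resp.\ }v^F(S)+v^L_{max}(N\setminus S)=v(N)\bigr)
\]
for every $S\subset N$. The proof of Theorem~\ref{main} already establishes $v^F(S)+v^L_{min}(N\setminus S)\le v(N)$ and $v^F(S)+v^L_{max}(N\setminus S)\le v(N)$ via feasibility complementarity, so the whole task reduces to identifying when these inequalities are tight.

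For part \textbf{(i)}, abbreviate $\mu_S:=\mu_S(\ominus_{N\setminus S})$. Every $a_S\in\mu_S$ yields the same total $\sum_{i\in S}R_i(a_S)=v^F(S)$, and for any such $a_S$ the best-response total $\sum_{i\in N\setminus S}R_i(a_{N\setminus S}(a_S))$ is uniquely determined. Unpacking the definition in which the first-moving coalition (here $S$) selects the maximizer that is worst for the late mover,
\[
v^L_{min}(N\setminus S)=\min_{a_S\in\mu_S}\sum_{i\in N\setminus S}R_i\bigl(a_{N\setminus S}(a_S)\bigr),
\]
so that for any argmin $a_S^{*}$ one has
\[
v^F(S)+v^L_{min}(N\setminus S)=\sum_{i\in N}R_i\bigl(\{a_S^{*},\,a_{N\setminus S}(a_S^{*})\}\bigr).
\]
By feasibility complementarity the profile $\{a_S^{*},a_{N\setminus S}(a_S^{*})\}$ lies in $f_N$, and the right-hand side coincides with $v(N)$ if and only if this profile is a maximizer of the grand-coalition problem. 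Since every argmin $a_S^{*}$ delivers the same total, existence of at least one such argmin whose sequential best-response profile is grand-coalition optimal is equivalent to the desired equality; this is exactly the condition stated in (i).

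Part \textbf{(ii)} follows the same template with a single change: the first-moving coalition $S$ now picks the element of $\mu_S$ that is \emph{best} (rather than worst) for $N\setminus S$'s subsequent best response, turning the outer $\min$ into a $\max$ in the expression for $v^L_{max}(N\setminus S)$. The feasibility-complementarity bound and the equivalence with a jointly optimal grand-coalition profile go through verbatim. The only (mild) obstacle is the bookkeeping step identifying the two-summand coalitional value with the revenue from a single jointly feasible action profile, which is exactly where feasibility complementarity does the work, as in Theorem~\ref{main}.
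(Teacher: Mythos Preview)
The paper states Proposition~\ref{dual} without proof, so there is no author argument to compare against. Your proof is correct and is precisely the one the paper implicitly invites: you reduce duality to the pointwise identity $v^{F}(S)+v^{L}_{\min}(N\setminus S)=v(N)$ (respectively with $v^{L}_{\max}$), invoke the feasibility-complementarity inequality from the proof of Theorem~\ref{main} to get $\leq$, and then observe that tightness for a given $S$ is exactly the condition that the sequential profile $\{a_S^{*},a_{N\setminus S}(a_S^{*})\}$---with $a_S^{*}$ the worst (resp.\ best) maximizer for the late mover---lies in $\arg\max_{a_N\in f_N}\sum_{i\in N}R_i(a_N)$. Your bookkeeping remark that any argmin (resp.\ argmax) $a_S^{*}$ yields the same combined total, and that any best response $a_{N\setminus S}(a_S^{*})$ yields the same $N\setminus S$ total, cleanly handles the possible multiplicity of maximizers.
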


\begin{corollary}[of Proposition \ref{prop:dual} and Theorem \ref{main}
\label{cor:dual}]{\ }
\begin{enumerate}
\item[\bf{(i)}]
For all $P\in \mathcal{P}^{-}$, if $v^F$ and $v^L_{min}$ are dual, then $\mathcal{A}\left( v^{o}\right) =\mathcal{C}\left(
v^{p}\right)$.
\item[\bf{(ii)}]
For all $P\in \mathcal{P}^{+}$, if $v^F$ and $v^L_{max}$ are dual, then $\mathcal{A}\left( v^{L}_{max}\right) =\mathcal{C}\left(
v^{p}\right)$.
\end{enumerate}    
\end{corollary}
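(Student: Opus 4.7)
The plan is to treat both parts as a direct substitution exercise, chaining together the identifications from Theorem \ref{v^T} with the general duality fact from Proposition \ref{prop:dual}. Since the duality hypothesis gives an equality of anti-core and core for the first/last games, and Theorem \ref{v^T} replaces $v^F/v^L_{min}/v^L_{max}$ with $v^o/v^p$ on the appropriate externality domains, the two results simply compose.

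For part \textbf{(i)}, I would fix $P \in \mathcal{P}^{-}$ and invoke Theorem \ref{v^T}(i)--(ii) to obtain $v^o(\cdot, P) = v^F(\cdot, P)$ and $v^p(\cdot, P) = v^L_{min}(\cdot, P)$ as value functions. Consequently $\mathcal{A}(v^o(\cdot, P)) = \mathcal{A}(v^F(\cdot, P))$ and $\mathcal{C}(v^p(\cdot, P)) = \mathcal{C}(v^L_{min}(\cdot, P))$, so the claimed equality reduces to $\mathcal{A}(v^F) = \mathcal{C}(v^L_{min})$. Under the hypothesis that $v^F$ and $v^L_{min}$ are dual, Proposition \ref{prop:dual} delivers precisely this equality. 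Note that Theorem \ref{main}(i) already supplies the inclusion $\mathcal{A}(v^F) \subseteq \mathcal{C}(v^L_{min})$ without any duality assumption, and duality is what upgrades this to the reverse inclusion.

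For part \textbf{(ii)}, I would fix $P \in \mathcal{P}^{+}$ and use Theorem \ref{v^T}(iii) to identify $v^p(\cdot, P) = v^F(\cdot, P)$, so $\mathcal{C}(v^p(\cdot, P)) = \mathcal{C}(v^F(\cdot, P))$. The target $\mathcal{A}(v^L_{max}) = \mathcal{C}(v^p)$ then reduces to $\mathcal{A}(v^L_{max}) = \mathcal{C}(v^F)$. By hypothesis $v^F$ and $v^L_{max}$ are dual, so Proposition \ref{prop:dual} yields the equality directly, with Theorem \ref{main}(ii) again giving the forward inclusion $\mathcal{A}(v^L_{max}) \subseteq \mathcal{C}(v^F)$ unconditionally and duality supplying the reverse inclusion. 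Note that in the positive-externality case we do not generally have $v^o = v^L_{max}$, which is exactly why the corollary asserts $\mathcal{A}(v^L_{max})$ rather than $\mathcal{A}(v^o)$ in part (ii).

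There is no real obstacle: the proof is a mechanical composition of the named prior results, and no new construction is required. The only care point is to invoke Theorem \ref{v^T} for the correct translations on each externality class (two identities on $\mathcal{P}^{-}$, one identity on $\mathcal{P}^{+}$) and to resist the temptation to write $\mathcal{A}(v^o)$ in place of $\mathcal{A}(v^L_{max})$ in part (ii), since the equivalence between optimistic and choosing last is not guaranteed under positive externalities.
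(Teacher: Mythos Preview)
Your proposal is correct and matches the paper's intended argument: the paper gives no separate proof for this corollary, labeling it simply as a consequence of Proposition~\ref{prop:dual} and Theorem~\ref{main}, with the identifications $v^o=v^F$, $v^p=v^L_{\min}$ (on $\mathcal{P}^-$) and $v^p=v^F$ (on $\mathcal{P}^+$) from Theorem~\ref{v^T} implicit (they are exactly what was used in the proof of Corollary~\ref{cor:main}). Your observation that Proposition~\ref{prop:dual} alone already yields the full equality once duality is assumed, so that Theorem~\ref{main} is only needed for the unconditional forward inclusion, is correct and a nice clarification.
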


Two important applications exhibiting duality are bankruptcy (claims) problems and airport problems.

The \emph{bankruptcy problem} deals with sharing an estate $E$ of a perfectly divisible resource among agents $N$ who have conflicting claims. That is, the sum of claims is larger than the estate: $\sum\limits_{i\in N}c_{i} > E$ where $c_{i}$ is the claim of agent $i$. \cite{oneill82} studied such problems from an economic point of view. He introduced an associated TU game to each bankruptcy problem and also defined the run-to-the-bank rule based on an average over all possible orders on agents' arrival. Within our framework, the action of agent $i$ is the amount that he takes from the estate, which is also his revenue. We thus have a negative feasibility externality problem.

As our results show, the optimistic approach corresponds to a bank-run situation, in which coalition $S$ arrives first and collects its combined claim or the endowment, whichever is smallest. The pessimistic approach has coalition $S$ arriving last, collecting what is left after the bank run of $N\setminus S$. The combination of the optimistic action of $S$ and the pessimistic action of $N\setminus S$ always leads to a full distribution of the endowment, and thus to an efficient outcome. Following Proposition \ref{dual}, the two games are dual.

The \emph{airport problem} introduced by \cite{lo73} aims to allocate the cost of a landing strip among users with varying runway length requirements. Every agent $i$ requires a length $l_{i}$ at the runway. It is assumed that the cost to build the runway is non-decreasing in its length. That is, for any two agents $i$ and $j$ such that $l_i<l_j$, $c(l_{i})\le c(l_{j})$. Within our framework, the action of agent $i$ is the segment of the runway that he builds, and his revenues is the cost of that segment.\footnote{His revenues are -$D$, with $D$ arbitrarily large, if agent $i$ does not have a runway of at least his desired length.} We thus have a problem with positive feasibility externalities.

According to our results, the pessimistic approach assumes that coalition $S$ arrives first to build its runway. The longest runway required by a member of the coalition will be built, which is $\max\limits_{i\in S}l_{i}$. Suppose next that coalition $N\setminus S$ picks last. Knowing that a runway of length $\max\limits_{i\in S}l_{i}$ has been built, it extends it, if needed, to a length of $\max\limits_{i\in N}l_{i}$. It is easy to see that this is the optimistic marginal contribution (by having some members of $N\setminus S$ choose first, it is impossible to get $S$ to build a longer runway). Given the inelastic demands of our agents, the length of the runway is efficient. Hence, the optimistic and pessimistic approaches are dual for airport problems.

We conclude this section by an illustration of the line between duality and non-duality. In a cooperative production problem, a set of agents share a production technology to produce some good(s). This joint production technology might exhibit increasing or decreasing returns to scale/scope.

As shown in Examples \ref{ex_drts} and \ref{ex_irts}, when the quantities consumed are endogenously determined (see, for instance, \citealp{MOULIN1990}; \citealp{rs93}; \citealp{FLEURBAEY1996}), the approach chosen matters, and a coalition $S$ choosing first and the complement $S$ might very well lead to an efficient quantity being produced and/or its distribution to agents being inefficient. Thus, we have no duality.

If we suppose that demands for the good(s) are exogenous (see, for instance, \citealp{ms92}; \citealp{m96}; \citealp{df98}), then producing these inelastic demands is efficient. It is easy to see that if the marginal cost of production is increasing when the number of units already produced, then choosing first is the best option and choosing last is the worst option. Given that demands are inelastic, we always obtain that the same (efficient) total quantity is produced by having $S$ and $N\setminus S$ sequentially choose. Thus, the optimistic and pessimistic games are dual.

\section{Extensions}
\label{sec:extension}

So far, we have considered what we call feasibility externalities, in which the actions taken affect the feasible sets of other agents. We now briefly discuss what happens when we have direct externalities, in which the actions taken by a group directly affects the revenues obtained by other agents. The additional difficulty in that case is what we have to determine who to credit for these direct externalities. 

We first notice that $v^{\alpha}$ and $v^{\beta}$ are now much more different than $v^F$ and $v^L_{min}/v^L_{max}$. They still, however, involve what we deem as non-credible threats. 

The links between first/last and best/worst scenarios are also weaker than without direct externalities. Suppose that both the direct and the feasibility externalities are negative, i.e. actions of $S$ limits the feasible set of $N\setminus S$ and have a direct negative impact on its revenue. Then, an optimistic viewpoint would be such that a coalition $S$ can completely ignore the externalities it generates on $N\setminus S$. A pessimistic viewpoint would force a coalition to internalize all negative externalities imposed on others.

Under the natural assumptions that externalities generated by $S$ are more important when combined with the actions of other agents we obtain the equivalent of Theorem \ref{v^T}i), and the best marginal contributions is when $S$ picks first. However, we lose Theorem \ref{v^T}ii) and it is not guaranteed that the worst marginal contribution is when $S$ picks last.

\begin{example}

To illustrate, we modify the queueing problem as follows. Suppose
that we have four agents with a job to process, but that the jobs are slightly
different, which requires an employee calibrating the machine to perform it
perfectly. For budgetary reasons, the employee can only perform a
calibration every two periods, meaning that it will provide an "average"
calibration for the two upcoming jobs that leaves the jobs not being
executed in a perfect manner. More precisely, the quality of the job depends
on who an agent is "matched" with, in the schedule. Suppose that if $i$ and $%
j$ are matched, the job of agent $i$ will be of a lower quality, resulting
in a cost of $d_{ij}\geq 0.$ Thus, if we use the schedule $i,j,k,l,$ there
is a waiting cost of $-w_{i}-2w_{j}-3w_{k}-4w_{l}$ and a matching cost of $%
-d_{ij}-d_{ji}-d_{kl}-d_{lk}.$

We show that in such a problem, even though we have negative externalities,
choosing last might not be when the marginal contribution is lowest. More
precisely, suppose that $w_{i}=1$ for all $i$ and suppose that $%
d_{13}=d_{31}=3,$ $d_{14}=d_{41}=d_{34}=d_{43}=4$, $d_{24}=d_{42}=5$ and $%
d_{ij}=0$ otherwise. In addition suppose that all agents have a deadline at
the fourth period, so that the waiting cost is arbitrarily large if
their job is not processed in the first four periods. In other words, $\mathbb{A}_i=\{1,2,3,4\}$ for all $i\in N$.

We consider $S=\left\{ 1,2\right\} .$ We are interested in its pessimistic
value, so we suppose that $S$ is held responsible for the externalities it
creates. We first examine what happens if $S$ chooses last. Then, $\left\{
3,4\right\} $ chooses first. If they take the first two spots, then they are
matched to each other. The waiting cost is 3 and the matching cost is 8$.$
Their other option is for one of them to take the first spot and the other
the third. Their waiting cost is 4, and since no match is created yet,
that's all they would pay. Thus that's what they select. Then, when $S$
comes in, it has to create the two matches and fully pay for the
externalities. Their option is to creates matches 1-3 and 2-4 or 2-3 and
1-4. In all cases, the waiting costs are 6. If they go with 1-3 and 2-4 they
create 16 in negative externalities. If they go with 2-3 and 1-4 they create
8. Thus, they pick 2-3 and 1-4, and $v_{\min }^{L}(S)=-14.$

We now consider $v_{\min }^{\left\{ 1\right\} \subseteq S}.$ Agent 1 has two
undominated strategies: either pick the first or the third position in the
queue, as picking second or fourth doesn't change the decision for $\left\{
3,4\right\} $ and only increases the waiting cost for agent 1. Suppose that
he picks the first position. Then, because agent 3 has a smaller disutility
than agent 4 if matched to agent 1, the only options for $\left\{
3,4\right\} $ are to pick positions 2 and 3 (waiting cost of 5 and matching
cost of 3) or positions 3 and 4 (waiting cost of 7 and matching cost of 8).
Thus, $\left\{ 3,4\right\} $ picks positions 2 and 3. Agent 2 then has no
choice but to pick position 4. Then, $S$ has a waiting cost of 5, and must
fully pay the externality created by the match of agents 2 and 4
(-10) and the disutility of agent 1 being matched to agent 3 (-3). That
results in a total value of $-18.$

If agent 1 picks the third position instead, then once again $\left\{
3,4\right\} $ has 2 options: match agent 3 with agent 1 (position 4) and put
agent 4 in position 1 (cost is 8) or pick positions 1 and 2 and match the
two members to each other (cost is 11). They thus pick the former, leaving
position 2 for agent 2. The value for $S$ is then composed of the waiting
cost ($-5)$, the full externality caused by the match between 2 and 4 $(-10)$
and the disutility of agent 1 being matched to agent 3 (-3). This also
results in a total value of $-18.$

Thus, $v_{\min }^{\left\{ 1\right\} \subseteq S}=-18<-14=v_{\min }^{L}(S).$
This occurs here because by having agent 1 commits first, it leads $%
N\setminus S$ to force the match between $2$ and $4,$ which is much worse
than the match between $2$ and $3.$

\end{example}
\medskip

However, by duality and by simplicity of calculation, we might want to consider $v^L_{min}$ as an approximation of the pessimistic value, especially since Theorem \ref{main} still applies, and $\mathcal{A}(v^F)$ and $\mathcal{C}(v^L_{min})$ provide subsets of $\mathcal{C}(v^p)$.   

\cite{trudeau-rosenthal2023}, studying a pipeline usage problem that generates negative (pollution) externalities, adopt a different position on how externalities should be credited. Using a polluters pay principle, they suppose that a coalition must compensate outsiders for the pollution generated by their usage. Thus, a coalition choosing first must internalize the damages caused to others when choosing their actions. In their particular setting, choosing first yields the largest marginal contributions, and choosing last yields the smallest ones.

In the case of positive externalities, the addition of direct externalities would still, under simple assumptions, leave $v^F$ as the pessimistic value. Since we already could not guarantee that choosing last was the optimistic value, this is further exacerbated by the presence of direct externalities, although, again, it might still be the case in some applications and a decent approximation in others.

\section{Concluding remarks}
\label{sec:conc}

In the presence of externalities, defining a coalitional value game requires making assumptions on the behavior of other agents. Given the multiple possibilities it is natural to define upper (optimistic) and lower (pessimistic) bounds on what a coalition should obtain. From a normative point of view, optimistic and pessimistic approaches should be analyzed with opposite solution concepts, as the former gives us upper bounds, while the latter provides lower bounds. In particular, if we use the core for games obtained from the pessimistic approach, we should use the anti-core for the optimistic approach. 

We have argued for the use of credible threats/gestures in determining these extreme bounds, in opposition to the classic definitions of $\alpha$ and $\beta$ games. In our settings with feasibility externalities, we have shown that there is a great benefit in carefully defining these optimistic and pessimistic approaches. Under negative externalities, the optimistic approach corresponds to a coalition choosing first, while the pessimistic approach corresponds to choosing last. In contrast, with positive externalities, while the pessimistic approach aligns with choosing first, the optimistic approach does not always correspond to choosing last.

In addition, when optimistic/pessimistic corresponds to first/last (or vice-versa), we obtain a powerful inclusion result: the anti-core of the optimistic game is always a subset of the core of the pessimistic game. Furthermore, we guarantee the non-emptiness of the anti-core in the presence of negative externalities, avoiding the need for restrictive structural assumptions.


Moreover, when the games defined based on whether a coalition chooses first or last are dual, it is redundant to study both games. Our results show that this duality is easy to spot: the greedy sequential choices of a coalition and its complement always lead to efficient outcomes.

Our main results are useful in multiple ways, particularly when optimistic/pessimistic approaches coincides with choosing first/last. First, it clearly indicates which bounds are easier to satisfy, and even if we believe that the core of the pessimistic game is more interesting, the anti-core of the optimistic game is a refinement as ensuring that nobody surpasses the optimistic bounds is more demanding than ensuring that nobody falls below the pessimistic bounds. In some applications, like in minimum cost spanning trees, this subset of allocations were shown to possess many interesting additional properties. Second, this also allows us an additional way to show that the core of the pessimistic game is non-empty. In fact, if externalities are negative, it is by going through the anti-core of the optimistic game that we can show that the core of the pessimistic game is always non-empty.

Through the applications, we learned that when we have positive feasibility externalities two conditions must be met for the optimistic function to not be obtained by choosing last. First, we need coalition $S$ to not only care if $N\setminus S$ is active or not, but to care about the precise actions they have taken. Second, we need coalition $S$ to be able to change the optimal behavior of $N\setminus S$ if some of its members choose first. In minimum cost spanning tree problems, the first condition is not met, as $S$ only cares if $N\setminus S$ is connected, but not how. In joint production problems with increasing returns to scale and inelastic demands, $S$ cannot change the behavior of $N\setminus S$, who will always choose to produce their inelastic demands. Thus, in both cases, the optimistic value is obtained when $S$ chooses last. The counterexamples we provide satisfy both conditions.

If we do not quite have that first/last is equivalent to optimistic/pessimistic, there are still benefits in using first/last as an approximation. It is computationally much simpler: calculating $v_{\min }^{T\subseteq S}$ involves
comparing $n!$ different combinations of strategies, so that calculating all 
$v_{\min }^{T\subseteq S}$ for all $T\subseteq S$ and all $S\subseteq N$ would require $%
2^{\left\vert S\right\vert }n!$ calculations. For the approximation using $v_{\min }^{L}(S)$ or $v_{\max }^{L}(S)$ we only require $n!$
calculations. This allows for the possibility of defining analytically the value function or the corresponding solution concepts like its Shapley value. In addition, the inclusion result is particularly useful to show non-emptiness or to define a subset of the pessimistic core when that set is too large.


Another possibility when evaluating the value created by a coalition in the presence of externalities is to take a weighted average of the optimistic and pessimistic values (see \citealp{banerjee2024}). Here, we provide a justification for this approach. Consider a game in which coalition $S$ comes in after $N\setminus S$ has chosen its optimal (selfish) actions. In opposition to what we have considered in this paper, suppose that coalition $S$ can pay $N\setminus S$ to ``undo'' its actions. A similar game, for a queueing game where a coalition must deal with an existing queue, was discussed in \cite{Atatrudeau2024}. If that offer is accepted, $S$ maximizes the welfare of the grand coalition and receives the value created, net of the payment to $N\setminus S$. Coalition $N\setminus S$ requires a payment of at least $v^F(N\setminus S)$ to accept to undo its actions. Coalition $S$ would be willing to pay as much as $v(N)-v^L(S)$ for this privilege. It is easy to see that in this game, if $S$ pays the maximum price, it obtains a net valuation of $v^L(S)$, and thus it is as if $S$ was picking last without this option. If it pays the minimum price, its net valuation is $v(N)-v^F(N\setminus S)$ and the resulting game is the dual of $v^F$. Thus, if we assume that the price paid is somewhere between these extremes, the resulting game is equivalent to a weighted average of $v^F$ and $v^L$, and given our results, possibly to a weighted average of $v^o$ and $v^p$.

\bibliographystyle{te}
\bibliography{opt-pes}
\end{document}